\documentclass[a4paper,pra,twocolumn,showpacs,groupedaddress,superscriptaddress]{revtex4-1}
\usepackage{epsfig}
\usepackage{amsbsy,latexsym}
\usepackage{amsmath}
\usepackage{amsthm}
\usepackage{amssymb, mathrsfs}
\usepackage[mathscr]{eucal}
\usepackage{hyperref}
\usepackage{color}
\usepackage{subfigure}
\usepackage[normalem]{ulem}
\usepackage{cancel}

\def\<{\langle}
\def\>{\rangle}

\def\kk{\rangle\!\rangle}
\def\bb{\langle\!\langle}

\def\sZ{B}


\def\ket#1{|#1\>}
\def\bra#1{\<#1|}


\newcommand{\cI}{\mathcal{I}}
\newcommand{\cB}{\mathcal{B}}
\newcommand{\cE}{\mathcal{E}}
\newcommand{\cF}{\mathcal{F}}
\newcommand{\cU}{\mathcal{U}}
\newcommand{\cS}{\mathcal{S}}
\newcommand{\cH}{\mathcal{H}}


\newcommand{\scalar}[2]{\langle #1 | #2 \rangle}
\newcommand{\ketbra}[2]{| #1 \rangle \langle #2 |}

\newcommand{\tr}[1]{\mathrm{tr}[#1]}

\newtheorem{lemma}{Lemma}
\newtheorem{prop}{Proposition}
\newtheorem{theorem}{Theorem}
\newtheorem{corollary}{Corollary}

\newcommand{\ii}{\mathrm{i}}

\begin{document}
\title{Exploring boundaries of quantum convex structures: special role of unitary processes}
\author{Zbigniew Pucha\l{}a}
\affiliation{Institute of Theoretical and Applied Informatics, Polish Academy of Sciences, ulica Ba\l{}tycka 5, 44-100 Gliwice, Poland}
\affiliation{Institute of Physics, Jagiellonian University, ulica prof. Stanis\l{}awa \L{}ojasiewicza 11, 30-348 Krak\'o{}w, Poland}

\author{Anna Jen\v cov\'a}
\affiliation{Institute of Mathematics,~Slovak Academy of Sciences, \v Stef\'anikova,~84511 Bratislava,~Slovakia}
\author{Michal Sedl\'ak}
\affiliation{Department of Optics, Palack\'y University, 17. listopadu 1192/12,~77146 Olomouc, Czech Republic}
\affiliation{Institute of Physics,~Slovak Academy of Sciences,~D\'ubravsk\'a cesta 9,~84511 Bratislava,~Slovakia}
\author{M\'ario Ziman}
\affiliation{Institute of Physics,~Slovak Academy of Sciences,~D\'ubravsk\'a cesta 9,~84511 Bratislava,~Slovakia}
\affiliation{Faculty of Informatics,~Masaryk University,~Botanick\'a 68a,~60200 Brno, Czech Republic}

\begin{abstract}
We address the question of finding the most effective convex decompositions
into boundary elements (so-called boundariness) for sets of quantum states,
observables and channels. First we show that in general convex sets the
boundariness essentially coincides with the question of the most
distinguishable element, thus, providing an operational meaning for this
concept. Unexpectedly, we discovered that for any interior point of the set of
channels the optimal decomposition necessarily contains a unitary channel. In
other words, for any given channel the best distinguishable one is some unitary
channel. Further, we prove that boundariness is sub-multiplicative under
composition of systems and explicitly evaluate its maximal value that is
attained only for the most mixed elements of the considered convex structures.
\end{abstract}
\pacs{3.67.-a}
\maketitle

\section{Introduction}
Convexity, rooted in the very concept of probability,
is one of unavoidable mathematical features of our description
of physical systems. Operationally, it originates in our
ability to switch randomly between different physical devices
of the same type. As a result, all elementary quantum structures
and most of the quantum properties are "dressed in convex clothes".
For example, the sets of states, observables and processes are
all convex, and it is of foundational interest to understand the
similarities and identify the differences of their convex structures.

For any convex set, we may introduce the concept of an interior point in a
natural way as a point that can be connected to any other point by a line
segment containing it in its interior. We will use this concept to define
mixedness and boundariness as measures evaluating how much the element is not
extremal, or how much the element is not a boundary point, respectively. More
precisely,  mixedness  will be determined via the highest weight occurring in
decompositions into extremal points and boundariness  will be determined via
the highest weight occurring in decompositions into boundary points. In both
cases, these numbers tell us how much randomness is needed to create the given
element. Since we focus on sets of quantum devices related to finite dimensional
Hilbert spaces, we will work in finite dimensional setting, but note that
similar definitions can be introduced also in infinite dimensions, although some
of the facts used below are no longer true.

If the given convex set is also compact, it can be viewed as a base of a closed
pointed convex cone and we may consider the corresponding base norm in the
generated vector space (see e.g. \cite{rockafellar}). Note that the related
distance between points of the base can be determined solely from the convex
structure of the base (see for instance recent works
\cite{reeb_etal2011,errka}). As it is well known for
quantum states \cite{Helstrom, Holevo} and as has been  recently proved for
other quantum devices \cite{jencova2013}, this distance is closely related to
the minimum error discrimination problem.

It was proved in Ref. \cite{haapasalo2014} that for the sets of quantum states
and observables, boundariness and the base norm distance are closely related.
More precisely, the largest distance of a given interior point $y$ from another
point of the base is given in terms of boundariness of $y$. In the present
paper, we show that this is true for any base of the positive cone in a finite
dimensional ordered vector space. In particular, for sets of quantum devices,
this property singles out a subset of extremal elements that are best
distinguishable from interior points. Exploiting these results, we will point
out an interesting difference between the convex sets of states and channels,
and also provide an unexpected operational characterization of unitary channels.

This paper is organized as follows. In Section II we will provide readers with
basic elements of convex analysis and quantum theory relevant for the rest
of the paper. The concept of boundariness will be introduced in Section III,
where various equivalent definitions will be stated and also its operational
meaning will be discussed. In Section IV we will investigate the boundariness
for the case of quantum channels. In particular, we will prove a conjecture
stated in Ref.~\cite{haapasalo2014}. In Section V we will address the question of
boundariness for composition of systems and Section VI is devoted to
identification of elements for which boundariness achieves its maximal
value. Last Section VII summarizes our results.

\section{Quantum convex cone structures}
Suppose $V$ is a real finite-dimensional vector space  and $C\subset V$ is a
closed convex cone. We assume that $C$ is \emph{pointed}, i.e.
$C\cap -C=\{0\}$, and \emph{generating}, i.e. $V=C-C$.
Then $(V,C)$ becomes a partially ordered vector
space, with $C$ the cone of positive elements. Let $V^*$ be the dual space with
duality $\<\cdot,\cdot\>$, then we may introduce a partial order in $V^*$ as
well, with the dual cone of positive functionals $C^*=\{f\in V^*, \<f,z\>\ge 0,\
\forall z\in C\}$. Note that $C^*$ is again pointed and generating, and
$C^{**}=C$.

Interior points $z\in int(C)$ of the cone $C$ are characterized by the property
that for each $v\in V$ there is some $t>0$ such that $tz-v\in C$, that is, the
interior points of $C$ are precisely the \emph{order units} in $(V,C)$.
Alternatively, the following lemma gives a well known characterization of
boundary points of $C$ as elements contained in some supporting hyperplane of
$C$, see Ref. \cite[Section 11]{rockafellar} for more details.

\begin{lemma}
An element $z\in C$ is a boundary point, $z\in \partial C$,
if and only if there exists
a nonzero element $f\in C^*$ such that $\<f,z\>=0$. Clearly,
then also $f\in \partial C^*$.
\end{lemma}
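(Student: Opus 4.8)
The plan is to prove the two implications separately, invoking the supporting hyperplane theorem in one direction and the order-unit characterization of interior points (recalled just above) in the other, and then to read off the final remark from the symmetry $C^{**}=C$.

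For the forward implication, suppose $z\in\partial C$. Since $C$ is closed and convex and $z$ lies on its boundary, the supporting hyperplane theorem (see \cite[Section 11]{rockafellar}) provides a nonzero $f\in V^*$ whose level set through $z$ supports $C$, so that $\langle f,z\rangle$ equals either $\sup_{w\in C}\langle f,w\rangle$ or $\inf_{w\in C}\langle f,w\rangle$. The key observation, special to cones, is that such an extremal value over $C$ can only be $0$ or infinite: if $\langle f,w\rangle\neq 0$ for some $w\in C$, then replacing $w$ by $\lambda w$ with $\lambda\to\infty$ makes $\langle f,\lambda w\rangle$ unbounded of the corresponding sign, since $\lambda w\in C$. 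A finite supporting value is therefore $0$, which forces $\langle f,z\rangle=0$ and, using $0\in C$ to pin down the sign, $\langle f,w\rangle\ge 0$ for all $w\in C$; after possibly replacing $f$ by $-f$ we obtain the desired $f\in C^*$.

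For the reverse implication, I would argue by contraposition: if $z\in int(C)$, then no nonzero $f\in C^*$ can annihilate $z$. Indeed, interior points are exactly the order units, so for any candidate $f\in C^*$ with $\langle f,z\rangle=0$ and any $v\in V$ there is $t>0$ with $tz-v\in C$; applying $f\in C^*$ gives $\langle f,v\rangle\le t\langle f,z\rangle=0$. Running this for both $v$ and $-v$ yields $\langle f,v\rangle=0$ for every $v\in V$, whence $f=0$. Equivalently, every interior point pairs strictly positively with every nonzero element of $C^*$, and the contrapositive is precisely the statement that an annihilating nonzero $f\in C^*$ forces $z\in\partial C$.

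Finally, for the closing assertion $f\in\partial C^*$, I would apply the very same lemma to the cone $C^*$, whose dual is $C^{**}=C$: it says $f\in\partial C^*$ exactly when some nonzero element of $C$ annihilates $f$, and for $z\neq 0$ this witness is $z$ itself. I do not expect a genuine obstacle here, as the result is classical; the only points requiring care are fixing the sign of the supporting functional so that it lands in $C^*$ rather than $-C^*$, and observing that the addendum uses $z\neq 0$ (the degenerate apex $z=0$ being trivial, since $C$ generating makes $C^*$ nonzero).
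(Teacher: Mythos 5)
Your proof is correct. The paper does not actually prove this lemma---it states it as well known and defers to Rockafellar, Section 11---and your argument is exactly the standard one that citation supplies: the supporting hyperplane theorem combined with the observation that a finite supporting value of a linear functional over a cone must be $0$ (using $0\in C$) gives the forward direction, the order-unit characterization of $int(C)$ gives the reverse by contraposition, and duality $C^{**}=C$ gives the addendum. Your remark that the closing claim $f\in\partial C^*$ genuinely requires $z\neq 0$ (since for $z=0$ any nonzero $f\in int(C^*)$ annihilates $z$) is a careful catch that the paper's ``Clearly'' glosses over, though it is harmless in the paper's applications, where the lemma is only invoked for elements of a base $B$, which never contains $0$.
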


A \emph{base} of $C$ is a compact convex subset $B\subset C$ such that for every
nonzero $z\in C$, there is a unique constant $t>0$ and an element $b\in B$ such
that $z=tb$. The \emph{relative interior} $ri(B)$ is defined as the interior of
$B$ with respect to the relative topology in the smallest affine subspace
containing $B$. Note that we have $ri(B)=B\cap int(C)$, so that the boundary
points $z\in \partial B=B\setminus ri(B)$ can be characterized as in the
previous lemma.

There is a one-to-one correspondence between bases $B\subset C$ and order units
in the dual space  $e\in int(C^*)$, such that $B=\{ z\in C, \<e,z\>=1\}$ is a
base of $C$ if and only if $e$ is an order unit. The order unit $e$ determines
the \emph{order unit norm} in $(V^*,C^*)$ as
\[
\|f\|_e=\inf\{\lambda>0, \lambda e\pm f\in C^*\},\quad f\in V^*.
\]
Its dual is the \emph{base norm} $\|\cdot\|_B$ in $(V,C)$. In particular, we
obtain the following expression for the corresponding distance of elements of
$B$:
\begin{align}\label{eq:base}
\|x-y\|_B=2\sup_{g,e-g\in C^*}\<g,x-y\>,\qquad x,y\in B
\end{align}

We will now describe the basic convex sets (see Ref.\cite{heinosaari12}) of
quantum states, channels and measurements (observables). Let us stress that each
of these sets is a compact convex subset in a finite dimensional vector space
and as such forms a base of the positive cone of some partially ordered vector
space, so that these sets fit into the framework introduced above.

Let us denote by $\cH_d$ the $d$-dimensional Hilbert space associated
with the studied physical system. Then $\cS(\cH_d)$ stands for the set
of all density operators (positive linear operators of unit trace)
representing the set of quantum states.

Observables are identified with positive-operator valued measures
(POVMs) being determined by a collection of effects $E_1,\dots,E_m$
($O\leq E_j\leq I$) normalized as $\sum_j E_j=I$. Each effect $E_j$
defines a different measurement outcome. In particular, if the system
is prepared in a state $\varrho$, then $p_j=\tr{\varrho E_j}$ is the
probability of the registration of the $j$th outcome.

Quantum channels are modeled by completely positive trace-preserving linear
maps, i.e. by transformations $\varrho\mapsto \sum_l A_l\varrho A_l^\dagger$ for
any collection of operators $\{A_l\}_l$ satisfying the normalization
$\sum_l A_l^\dagger A_l=I$. Define the one-dimensional projection operator
$\Psi_+=\frac{1}{d}\sum_{j,k}\ket{jj}\bra{kk}$ on $\cH_d\otimes\cH_d$, where the
vectors $\ket{j}$ form a complete orthonormal basis on $\cH_d$. Due to
Choi-Jamiolkowski isomorphism \cite{Jamiolkowski72,Choi75}, the set of quantum
channels of a finite-dimensional quantum system is mathematically closely
related to the set of density operators (states) of a composite system. In
particular, a channel $\cE$ is associated with a density operator
\[
J_\cE=(\cE\otimes\cI)[\Psi_+]\in\cS(\cH_d\otimes\cH_d)
\]
and the normalization condition ${\rm tr}_1 J_\cE=\frac{1}{d}I$ is the only
difference between the mathematical representations of states and channels. In
other words,  only a special (convex) subset of density operators on
$\cH_d\otimes\cH_d$ can be identified with quantum channels on $d$-dimensional
quantum systems.

\section{Boundariness}

For any element of a compact convex subset $\sZ\subset V$ with boundary
$\partial \sZ$ and a set of extremal elements $ext(\sZ)$ we may introduce the
concepts of \emph{mixedness} and \emph{boundariness} evaluating the "distance"
of the element from extremal and boundary points, respectively. For any convex
decomposition $y=\sum_j \pi_j x_j$, where $0\leq\pi_j\leq 1$ and
$\sum_j\pi_j=1$, we define its maximal weight $w_y(\{\pi_j,x_j\}_j)=\max_j
\pi_j$. Using this quantity, we may express the mixedness of $y\in \sZ$ as
follows
$$
m(y)=1-\sup_{x_j\in ext(\sZ)} w_y(\{\pi_j,x_j\}_j)\,,
$$
where supremum is taken over all convex decompositions of $y$ into extremal
elements. In a similar way we may define the boundariness \cite{haapasalo2014}
of $y$ as
\begin{align}
\label{eq:defbyn}
b(y)=1-\sup_{x_j\in\partial \sZ} w_y(\{\pi_j,x_j\}_j)\,,
\end{align}
where supremum is taken over all decompositions into boundary elements. By
definition $m(y)\geq b(y)$, since the convex decompositions in (\ref{eq:defbyn})
are less restrictive.

\begin{figure}[t]
\centering
\subfigure[ ]{%
\includegraphics[width=0.48\linewidth]{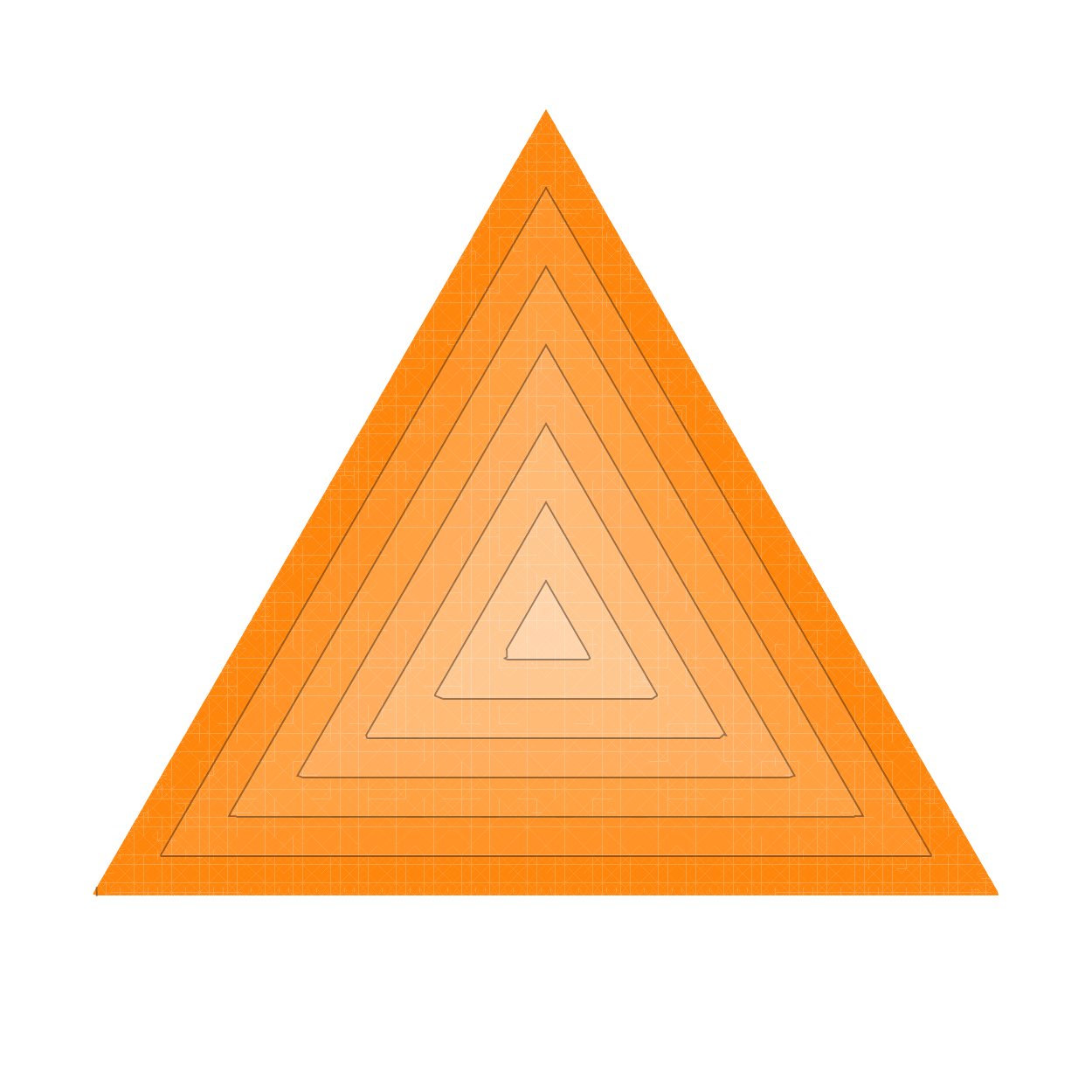}
\label{fig:subfigure1}}
\subfigure[ ]{%
\includegraphics[width=0.48\linewidth]{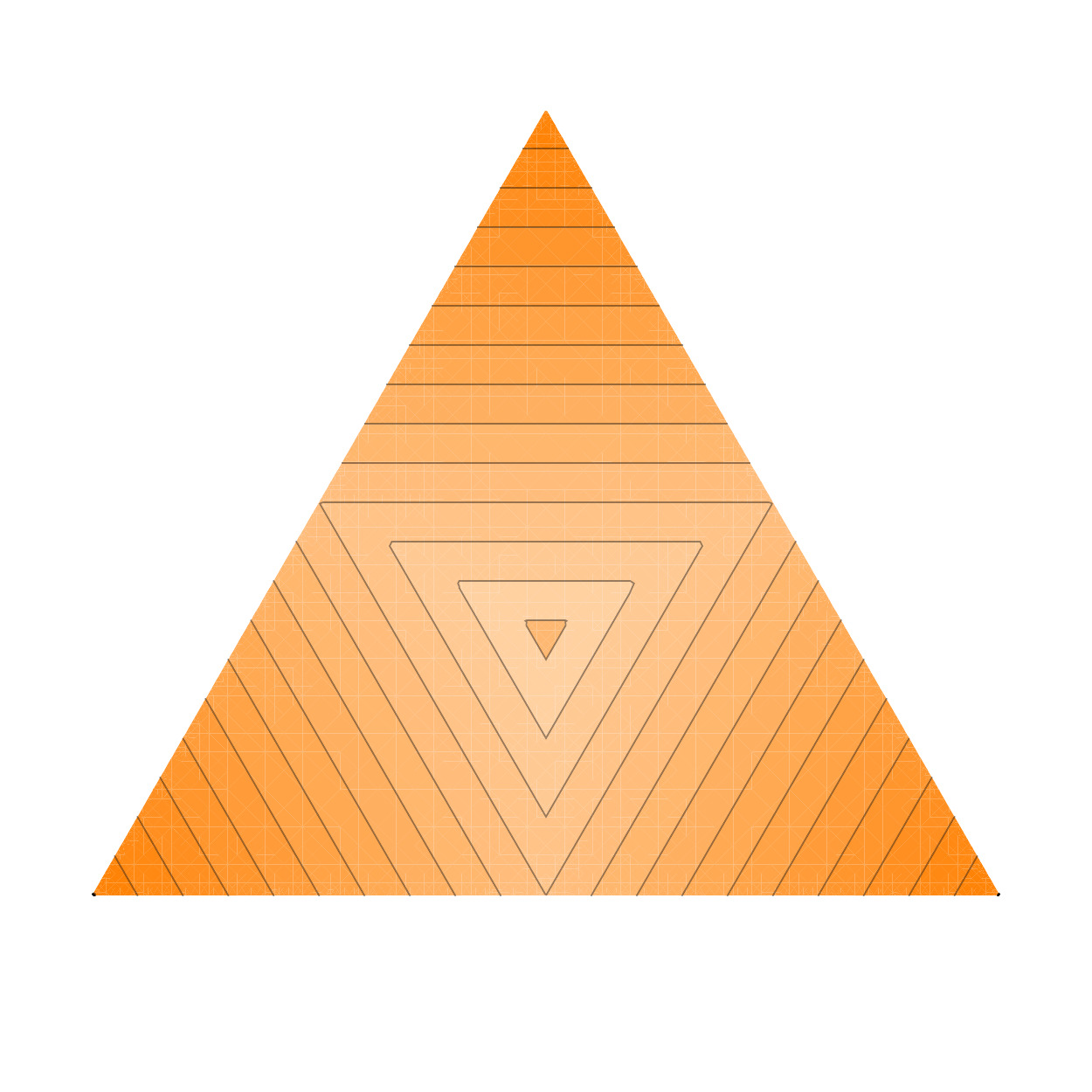}
\label{fig:subfigure2}}\\
\subfigure[ ]{%
\includegraphics[width=0.48\linewidth]{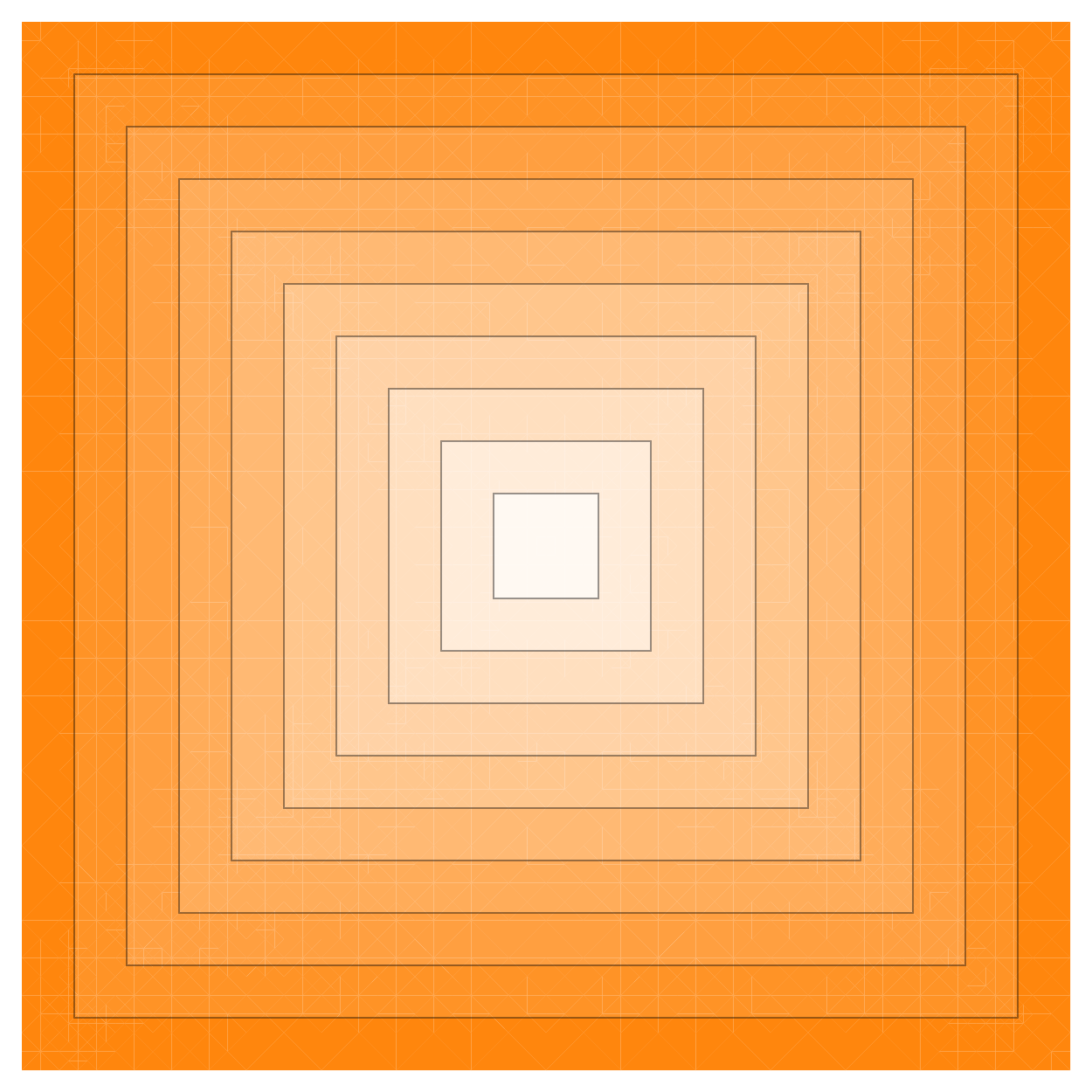}
\label{fig:subfigure3}}
\subfigure[ ]{%
\includegraphics[width=0.48\linewidth]{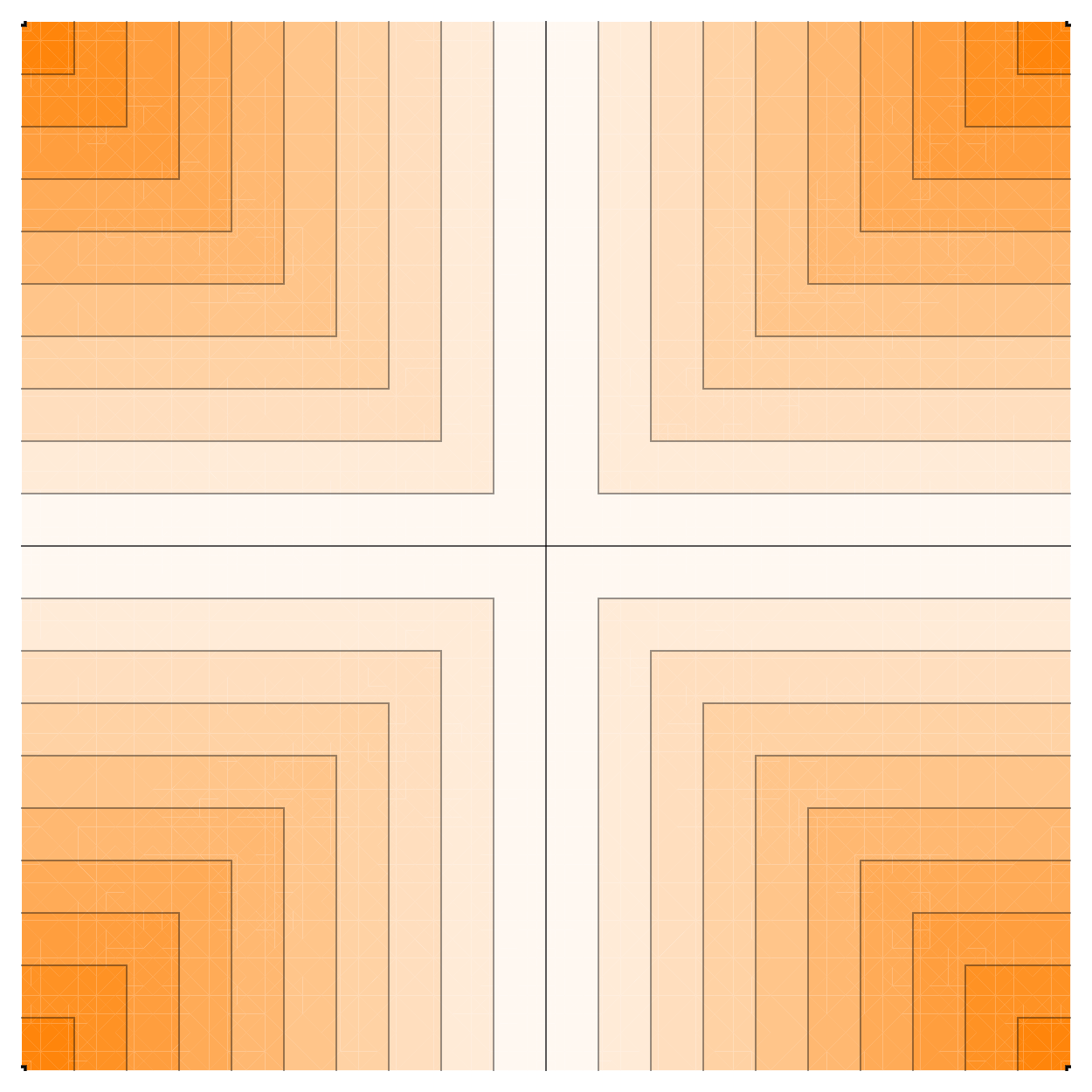}
\label{fig:subfigure4}}
\caption{(Color online) Illustration of boundariness (subfigure a,c) and
mixedness (subfigures b,d) for simple convex sets. }
\label{fig:numresults}
\end{figure}

Let us prove that the above formula is equivalent to the original definition
\cite{haapasalo2014} of boundariness. We recall that for any element $y\in \sZ$,
the {\it weight function} $t_y:\sZ \to [0,1]$ assigns for every $x\in \sZ$ the
supremum of possible weights of the point $x$ in convex decompositions of $y$,
i.e.
$$
t_y(x)=\sup\Big\{0\leq t < 1\,\Big|\,z=\frac{y-tx}{1-t}\in \sZ\Big\}\,.
$$
Thanks to  compactness of $\sZ$, the supremum is really attained and there
exists some $z\in\sZ$ such that $y=t x+(1-t)z$, where $t=t_y(x)$. Note that we
must have $z\in \partial \sZ$ and, in fact, for an interior point $y$,
$t=t_y(x)$ is equivalent to $z\in \partial \sZ$. Let us consider a convex
decomposition $y=\sum_j \pi_j x_j$, $x_j \in \partial \sZ$ and denote by $k$ the
index for which $\pi_k=\max_j \pi_j\neq 1$ (the case $\max_j \pi_j=1$ is trivial
and $b(y)=0$ in both definitions). If we define $\overline{x}_k=\sum_{j\neq k}
\frac{\pi_j}{1-\pi_k}x_j$ then $y=\pi_k x_k + (1-\pi_k)\overline{x}_k$, where
$\overline{x}_k\in \sZ$.  Either $\overline{x}_k \in \partial\sZ$ and we managed
to rewrite $y$ as a two term convex combination of elements from boundary or
$\overline{x}_k \in \sZ\setminus\partial\sZ$, which implies $\pi_k<t_y(x_k)$ and
there exists $w\in\partial\sZ$ such that a better two term decomposition $y=t
x_k+(1-t)w$ with $t>\pi_k$ exists. This shows that definition
(\ref{eq:defbyn}) is equivalent to
\begin{align}
b(y)&=1-\sup_{x,z\in \partial \sZ}\{s|y=(1-s)x+s z\} \nonumber \\
&=\inf_{x\in \partial \sZ}t_y(x)\,.  \nonumber
\end{align}
Finally, we obtain the original definition \cite{haapasalo2014}
\begin{align}
\label{eq:defbdorig}
b(y)=\inf_{x\in \sZ}t_y(x),
\end{align}
because the infimum is always determined by elements $x\in ext(\sZ)$ as we
discussed in Ref.~\cite[Proposition 1]{haapasalo2014}.

Having established the cone picture of quantum structures, it is useful to see
how boundariness can be defined using this language.
\begin{lemma}\label{lemma:bdd_dual}
Let $f\in C^*$. If $\|f\|_e=1$, then $e-f\in \partial C^*$.
\end{lemma}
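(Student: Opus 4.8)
The plan is to reduce the claim to Lemma~1 applied to the dual cone $C^*$, whose bidual is $C^{**}=C$. According to that lemma, $e-f$ is a boundary point of $C^*$ as soon as two things hold: first, $e-f\in C^*$, and second, there is a \emph{nonzero} element $z\in C^{**}=C$ annihilating it, i.e. $\<e-f,z\>=0$. So the entire argument comes down to producing such a witness $z$ and checking positivity.

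First I would unpack the hypothesis $\|f\|_e=1$ for a positive functional $f\in C^*$. For such $f$ the constraint $\lambda e+f\in C^*$ is automatic once $\lambda\ge 0$, so the order unit norm collapses to $\|f\|_e=\inf\{\lambda>0:\lambda e-f\in C^*\}$. Writing $\lambda e-f\in C^*$ as $\lambda\<e,z\>\ge\<f,z\>$ for all $z\in C$ and dividing by $\<e,z\>$ on the base $\sZ=\{z\in C:\<e,z\>=1\}$ yields the clean dual formula $\|f\|_e=\sup_{z\in \sZ}\<f,z\>$. In particular $\|f\|_e=1$ gives $\<f,z\>\le 1=\<e,z\>$ for every $z\in \sZ$, hence $\<e-f,z\>\ge 0$ on $\sZ$ and, by positive homogeneity, on all of $C$; this is exactly $e-f\in C^*$, establishing the first requirement.

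The main step is the second one, and here compactness does the work. Since $\sZ$ is compact and $z\mapsto\<f,z\>$ is continuous, the supremum $\sup_{z\in \sZ}\<f,z\>=1$ is \emph{attained} at some $z_0\in \sZ$. As $\<e,z_0\>=1$, the element $z_0$ is nonzero and lies in $C=C^{**}$, and moreover $\<e-f,z_0\>=\<e,z_0\>-\<f,z_0\>=1-1=0$. Thus $z_0$ is precisely the annihilating witness required by Lemma~1.

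Finally I would invoke Lemma~1 with the roles of the cone and its dual interchanged, which is legitimate because $C^{**}=C$: the existence of a nonzero $z_0\in C^{**}$ with $\<e-f,z_0\>=0$ certifies $e-f\in\partial C^*$. I do not expect a genuine obstacle here; the only point needing a moment's care is the passage from the infimum definition of the norm to the supremum-over-the-base representation and the attainment of that supremum, since everything downstream is immediate once that representation is in hand.
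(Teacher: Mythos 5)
Your proof is correct, but it takes a genuinely different route from the paper's. The paper argues by contradiction using nothing beyond the definitions of the order-unit norm and of interior points: if $e-f\in int(C^*)$, then $e-f-tf\in C^*$ for some $t>0$, hence $(1+t)^{-1}e-f\in C^*$ and $\|f\|_e\le (1+t)^{-1}<1$, contradicting the hypothesis (the membership $e-f\in C^*$ is immediate from closedness of $C^*$). You instead give a direct proof: you derive the dual representation $\|f\|_e=\sup_{z\in \sZ}\<f,z\>$ for positive $f$, use compactness of the base $\sZ$ to attain the supremum at some $z_0$ with $\<e-f,z_0\>=0$, and then apply Lemma 1 to the cone $C^*$ itself, which is legitimate since $C^*$ is closed, pointed and generating with $C^{**}=C$. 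Both arguments are sound; each buys something. The paper's is shorter and needs neither compactness of the base nor the bidual identification, so it would survive in settings where the supremum is not attained. Yours is constructive and proves strictly more: it exhibits a nonzero element $z_0$ of the base annihilating $e-f$, which is exactly the fact the paper must re-derive (by invoking Lemma 1 a second time) in the converse half of the proof of Proposition \ref{prop:evalbd1}, where from $e-f\in\partial C^*$ one extracts $x\in\sZ$ with $\<e-f,x\>=0$. In effect your argument merges Lemma \ref{lemma:bdd_dual} with that later step, at the cost of invoking heavier machinery for the lemma itself.
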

\begin{proof} Suppose $\|f\|_e=1$, then $e-f\in C^*$. If $e-f\in int(C^*)$, then there is some $t>0$ such that
$e-f\pm tf\in C^*$. But then $(1+t)^{-1}e-f\in C^*$, so that $\|f\|_e\le (1+t)^{-1}<1$.
\end{proof}

We now find an equivalent expression for boundariness.
\begin{prop}
\label{prop:evalbd1}
$b(y)=\min\{ \<f,y\>,\ f\in C^*, \|f\|_e=1\}$.
\end{prop}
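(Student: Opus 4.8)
The plan is to translate the primal definition $b(y)=\inf_{x\in\sZ}t_y(x)$ into a dual optimization over $C^*$, exploiting the bipolar relation $C^{**}=C$. First I would simplify the weight function. Since $\<e,y\>=\<e,x\>=1$ for $x,y\in\sZ$, the element $z=(y-tx)/(1-t)$ automatically satisfies $\<e,z\>=1$, so the requirement $z\in\sZ$ reduces to $z\in C$, i.e.\ to $y-tx\in C$. Hence
\[
t_y(x)=\sup\{t\ge 0:\ y-tx\in C\}.
\]
Next I would dualize this membership condition. Because $C^{**}=C$, we have $y-tx\in C$ if and only if $\<g,y-tx\>\ge 0$ for every $g\in C^*$. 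For $g$ with $\<g,x\>=0$ this holds trivially (as $\<g,y\>\ge 0$), while for $g$ with $\<g,x\>>0$ it reads $t\le\<g,y\>/\<g,x\>$; the case $\<g,x\><0$ cannot occur for $x\in C$, $g\in C^*$. Therefore
\[
t_y(x)=\inf_{g\in C^*,\,\<g,x\>>0}\frac{\<g,y\>}{\<g,x\>}.
\]

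Substituting this into $b(y)=\inf_{x\in\sZ}t_y(x)$ and merging the two infima yields a single joint infimum over pairs $(x,g)$, which I would then reorder as $\inf_{g}\inf_{x}$. For a fixed nonzero $g\in C^*$ the inner problem is to maximize the denominator $\<g,x\>$ over $x\in\sZ$. The key identity is that for positive $g$ the order unit norm equals this maximum, $\|g\|_e=\max_{x\in\sZ}\<g,x\>$: indeed $\lambda e+g\in C^*$ holds for every $\lambda\ge 0$, while $\lambda e-g\in C^*$ is equivalent to $\lambda\ge\<g,x\>$ for all $x\in\sZ$. This maximum is strictly positive because $C$ is generating, so $g$ cannot vanish on all of $C$. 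Consequently
\[
b(y)=\inf_{g\in C^*\setminus\{0\}}\frac{\<g,y\>}{\|g\|_e},
\]
and by homogeneity I may normalize to $\|g\|_e=1$. Compactness of $\{g\in C^*:\|g\|_e=1\}$ in the finite-dimensional space $V^*$ then converts the infimum into a minimum, giving the claimed expression. This derivation also covers boundary points: if $y\in\partial\sZ$, the first lemma supplies a nonzero $g\in C^*$ with $\<g,y\>=0$, so after normalization the minimum is $0$, consistent with $b(y)=0$, and $\<f,y\>\ge 0$ for every feasible $f$ guarantees the minimum is never negative.

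I expect the main obstacle to be the careful bookkeeping of the positivity constraints when passing to the dual and reordering the infima: one must verify that restricting to $g$ with $\<g,x\>>0$ does not discard the optimizer, and that the identity $\|g\|_e=\max_{x\in\sZ}\<g,x\>$ is valid precisely on $C^*$ (it fails for general $f\in V^*$, where the $+f$ side of the norm becomes active). The interchange of the two infima is itself automatic, being a double infimum, so there is no genuine minimax gap to close; the substance of the argument lies in recognizing that maximizing $\<g,x\>$ over the base $\sZ$ reproduces the order unit norm, which is exactly the link between the geometric weight $t_y(x)$ and the dual functionals of Proposition~\ref{prop:evalbd1}.
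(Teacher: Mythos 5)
Your proof is correct, but it takes a genuinely different route from the paper's. The paper argues by two separate inequalities: to show $\tilde b(y)\le b(y)$ it takes, for each $x\in \sZ$, the optimal decomposition $y=t_y(x)\,x+(1-t_y(x))\,z$ with $z\in\partial \sZ$ and normalizes a supporting functional at $z$ (Lemma 1); to show $\tilde b(y)\ge b(y)$ it invokes Lemma~\ref{lemma:bdd_dual} to conclude $e-f\in\partial C^*$, picks $x\in \sZ$ with $\<e-f,x\>=0$, and estimates $\<f,y\>\ge t_y(x)\ge b(y)$. You instead dualize the membership condition $y-tx\in C$ directly through the bipolar relation $C^{**}=C$, obtaining $t_y(x)=\inf_{g\in C^*,\,\<g,x\>>0}\<g,y\>/\<g,x\>$, then collapse the double infimum using the identity $\|g\|_e=\max_{x\in \sZ}\<g,x\>$ for $g\in C^*$ --- an identity the paper never states explicitly but which is indeed the precise link between weights and dual functionals; your verification of it (and of why it fails off $C^*$) is sound. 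Your route buys a uniform treatment of interior and boundary points $y$, bypasses Lemma~\ref{lemma:bdd_dual} entirely, and makes attainment of the minimum explicit via compactness of $\{g\in C^*:\|g\|_e=1\}$. The paper's route, in exchange, produces an explicit pointwise correspondence between near-optimal $x\in \sZ$ and optimal functionals $f$, which is exactly what gets recycled in the proof of Proposition~\ref{prop:attained}. One small point you glossed over: replacing $\sup\{0\le t<1:\ (y-tx)/(1-t)\in \sZ\}$ by $\sup\{t\ge 0:\ y-tx\in C\}$ requires noting that $y-tx\in C$ forces $t\le 1$ (since $\<e,y-tx\>=1-t$ must be nonnegative), with $t=1$ possible only when $x=y$; this follows immediately from the base property of $\sZ$ but deserves a sentence, since otherwise the enlarged range of $t$ could in principle inflate the supremum.
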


\begin{proof}
Let us denote the minimum on the right hand side by $\tilde b(y)$. Let $x\in
\sZ$ and $y=tx+(1-t)z$, with  $t=t_y(x)$. Then $z\in \partial \sZ$, so that
there is some nonzero $f\in C^*$ such that $\<f,z\>=0$. Put $\tilde f=
\|f\|_e^{-1}f$, then $\tilde f\in C^*$, $\|\tilde f\|_e=1$ and we have
\[
\tilde b(y)\le \<\tilde f,y\>=t_y(x)\<\tilde f,x\>\le t_y(x).
\]
Since this holds for all $x\in \sZ$, we obtain $\tilde b(y)\le b(y)$.

For the converse, let $f\in C^*$, $\|f\|_e=1$, then $e-f\in \partial C^*$. Hence
there is some element  $x\in \sZ$, such that $\<e-f,x\>=0$. Let $s=t_y(x)$, then
$y=sx+(1-s)z$ for some $z\in \partial \sZ$. We have
\[
\<f,y\>=1-\<e-f,y\>=1-(1-s)\<e-f,z\>\ge s=t_y(x)\ge b(y),
\]
hence $\tilde b(y)\ge b(y)$.
\end{proof}

Let $x,y\in \sZ$ and take $z\in \partial \sZ$ such that $y=sx+(1-s)z$, where  $s=t_y(x)$. Then
\begin{align}
\label{eq:upbound1}
\|x-y\|_B&=\|x-sx-(1-s)z\|_B \nonumber \\
&=(1-s)\|x-z\|_B\le 2(1-b(y))
\end{align}
constitutes the upper bound derived in \cite{haapasalo2014}.

\begin{prop}\label{prop:attained}
Let $y\in ri(\sZ)$ and let $x\in\sZ$. The following are equivalent.

\begin{enumerate}
\item[(i)] $\|y-x\|_B=2(1-b(y))$
\item[(ii)] $t_y(x)=b(y)$
\item[(iii)] There is some $f\in C^*$, with $\|f\|_e=1$ and $\<f,y\>=b(y)$, such that $\<f,x\>=1$.

\end{enumerate}
\end{prop}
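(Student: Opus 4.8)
The plan is to prove the three conditions equivalent by the cyclic chain (iii)$\Rightarrow$(i)$\Rightarrow$(ii)$\Rightarrow$(iii). The guiding observation is that the upper bound $\|x-y\|_B\le 2(1-b(y))$ recorded in (\ref{eq:upbound1}) was obtained by concatenating two independent estimates, so that equality in (i) must be the simultaneous saturation of both; this will pin down the geometry and produce the witness functional of (iii).

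First I would handle (iii)$\Rightarrow$(i), which is almost immediate. Given $f$ with $\|f\|_e=1$, Lemma~\ref{lemma:bdd_dual} yields $e-f\in C^*$, so $f$ is admissible in the supremum (\ref{eq:base}) defining the base norm. Inserting it gives $\|x-y\|_B\ge 2\langle f,x-y\rangle=2(\langle f,x\rangle-\langle f,y\rangle)=2(1-b(y))$, and combined with (\ref{eq:upbound1}) this forces equality. For (i)$\Rightarrow$(ii) I would revisit the derivation of (\ref{eq:upbound1}): writing $y=sx+(1-s)z$ with $s=t_y(x)$ and $z\in\partial\sZ$, one has $\|x-y\|_B=(1-s)\|x-z\|_B$, and the bound proceeds through $(1-s)\|x-z\|_B\le 2(1-s)\le 2(1-b(y))$, using the universal estimate $\|x-z\|_B\le 2$ for two base points together with $s\ge b(y)$. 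Equality in (i) forces each link to be tight; in particular $s=b(y)$, which is (ii). (Here $s<1$, since $s=1$ would give $x=y$ and a vanishing left-hand side, whereas $b(y)<1$.)

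The substantive step is (ii)$\Rightarrow$(iii), which requires constructing the functional. Assuming $t_y(x)=b(y)=:s$, decompose $y=sx+(1-s)z$ with $z\in\partial\sZ$, use the supporting-hyperplane characterization of boundary points (Lemma~1) to pick a nonzero $g\in C^*$ with $\langle g,z\rangle=0$, and set $f=\|g\|_e^{-1}g$, so that $\|f\|_e=1$ and $\langle f,z\rangle=0$. Then $\langle f,y\rangle=s\langle f,x\rangle$, and everything reduces to showing $\langle f,x\rangle=1$: the inequality $\langle f,x\rangle\le\langle e,x\rangle=1$ follows from $e-f\in C^*$, while Proposition~\ref{prop:evalbd1} gives $s\langle f,x\rangle=\langle f,y\rangle\ge b(y)=s$, hence $\langle f,x\rangle\ge 1$ after dividing by $s$.

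The hard part is exactly this final sandwich, and it is the only place where the hypothesis $y\in ri(\sZ)$ genuinely enters: the division by $s$ needs $b(y)=s>0$. I would establish this positivity separately — the set $\{f\in C^*:\|f\|_e=1\}$ is compact and avoids $0$, and for an interior point $y\in int(C)$ every nonzero $f\in C^*$ has $\langle f,y\rangle>0$, so the minimum in Proposition~\ref{prop:evalbd1} is strictly positive. With $\langle f,x\rangle=1$ in hand one also recovers $\langle f,y\rangle=s=b(y)$, completing (iii) and closing the cycle.
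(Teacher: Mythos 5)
Your proof is correct and follows essentially the same route as the paper: the identical cycle of implications (i)$\Rightarrow$(ii)$\Rightarrow$(iii)$\Rightarrow$(i), with the same ingredients — saturation of both links in (\ref{eq:upbound1}), the supporting functional from Lemma~1 normalized via $\|\cdot\|_e$, the sandwich from Proposition~\ref{prop:evalbd1}, and the base-norm lower bound (\ref{eq:base}) (you test with $f$ where the paper tests with $e-f$, which is the same computation). Your explicit justification that $b(y)>0$ for $y\in ri(\sZ)$ fills in a point the paper merely asserts, but this does not change the argument.
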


\begin{proof}  Suppose (i) and let $y=sx+(1-s)z$ with $s=t_y(x)$. Then
\[
2(1-b(y))=\|x-y\|_B=(1-s)\|x-z\|_B.
\]
Since both $(1-s)\le 1-b(y)$ and $\|x-z\|_B\le 2$, the equality implies that $t_y(x)=s=b(y)$.

Suppose (ii), then $y=b(y)x+(1-b(y))z$ for some $z\in \partial \sZ$. There is
some nonzero $f\in C^*$ such that $\<f,z\>=0$ and we may clearly suppose that
$\|f\|_e=1$. By Proposition \ref{prop:evalbd1}, $b(y)\le \<f,y\>=b(y)\<f,x\>\le
b(y)$. Since $y$ is an interior point, $b(y)>0$, so that we must have
$\<f,y\>=b(y)$ and $\<f,x\>=1$.

Finally, suppose (iii), then using inequalities
(\ref{eq:base}),(\ref{eq:upbound1}),
\begin{align*}
2(1-b(y))&\ge \|x-y\|_B\ge 2\<e-f,y-x\>=2\<e-f,y\>\\&=2(1-b(y)).
\end{align*}

\end{proof}

We now resolve the conjecture of the tightness of the upper bound
(\ref{eq:upbound1}) by showing that it can be always saturated.

\begin{theorem}\label{thm:tightness}
For any $y\in \sZ$, there exists some $x_0\in ext(\sZ)$, such that
\[
\|y-x_0\|_B=\sup_{x\in B}\|y-x\|_B=2(1-b(y)).
\]
\end{theorem}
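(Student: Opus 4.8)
The plan is to combine the upper bound from Proposition~\ref{prop:attained} (and the preceding inequality~\eqref{eq:upbound1}) with the dual characterization of boundariness in Proposition~\ref{prop:evalbd1}, showing that the bound $2(1-b(y))$ is achieved, and that it can be achieved at an extremal point. First I would observe that by~\eqref{eq:upbound1} we already have $\sup_{x\in B}\|y-x\|_B\le 2(1-b(y))$, so the entire content of the theorem is producing a single witness $x_0\in ext(\sZ)$ that saturates this inequality. Since saturation of the bound is equivalent to condition~(iii) of Proposition~\ref{prop:attained}, the task reduces to finding an \emph{extremal} $x_0$ together with a functional $f\in C^*$ with $\|f\|_e=1$, $\<f,y\>=b(y)$, and $\<f,x_0\>=1$.

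The natural starting point is the functional side. By Proposition~\ref{prop:evalbd1} the minimum defining $b(y)$ is attained at some $f\in C^*$ with $\|f\|_e=1$ and $\<f,y\>=b(y)$; existence follows from compactness of the set $\{f\in C^*:\|f\|_e=1\}$ intersected with the relevant affine constraints, or more directly because $t_y$ attains its infimum over the compact set $\sZ$. Given such an $f$, Lemma~\ref{lemma:bdd_dual} tells us $e-f\in\partial C^*$, so by Lemma~1 (the supporting-hyperplane characterization) there is a nonzero $z$ in the primal cone with $\<e-f,z\>=0$; normalizing it into the base $B$ gives a point $x\in\sZ$ with $\<e-f,x\>=0$, i.e. $\<f,x\>=1$. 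This already yields a (possibly non-extremal) maximizer $x$ satisfying condition~(iii), hence $\|y-x\|_B=2(1-b(y))$.

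The remaining step, which I expect to be the main obstacle, is upgrading this maximizer to an \emph{extremal} one. The set of $x\in\sZ$ satisfying $\<f,x\>=1$ is a face of $B$: it is exactly $B\cap H$ where $H=\{\,\cdot: \<e-f, \cdot\>=0\}$ is the supporting hyperplane, and since $e-f\in C^*$ this intersection is an exposed face of $B$, itself a compact convex set. Every compact convex set in finite dimensions has extreme points (Krein--Milman / Minkowski), and the extreme points of an exposed face of $B$ are extreme points of $B$ itself. Therefore I would pick any $x_0\in ext(B\cap H)\subseteq ext(\sZ)$; it still satisfies $\<f,x_0\>=1$, so condition~(iii) holds for $x_0$ and thus $\|y-x_0\|_B=2(1-b(y))$. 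Combining this with the universal upper bound~\eqref{eq:upbound1} gives the full chain of equalities, completing the proof; the delicate point to state carefully is precisely the claim that extreme points of the face are extreme points of the whole base, which is where the exposedness (rather than mere convexity) of the face is used.
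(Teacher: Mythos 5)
Your proof is correct, but it follows a genuinely different route from the paper's, so here is the comparison. The paper disposes of extremality in one sentence, via Bauer's maximum principle: $x\mapsto\|y-x\|_B$ is convex and continuous, so its supremum over the compact convex set $\sZ$ is attained at some $x_0\in ext(\sZ)$; it then only needs to exhibit \emph{some} $x\in\sZ$ saturating \eqref{eq:upbound1}, and it does this by a case split --- for $y\in ri(\sZ)$ it invokes Proposition~\ref{prop:attained} together with the attainment of $\inf_x t_y(x)$ quoted from Ref.~\cite{haapasalo2014}, while for $y\in\partial\sZ$ it runs exactly the dual argument you give (choose $f$ with $\|f\|_e=1$, $\<f,y\>=0$, then $x$ with $\<e-f,x\>=0$, and sandwich $\|y-x\|_B$ between the two bounds). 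Your proof differs in two places. First, you treat interior and boundary points uniformly: you take a minimizing functional $f$ from Proposition~\ref{prop:evalbd1} and manufacture the witness $x$ inside the exposed face $B\cap\{\<e-f,\cdot\>=0\}$; this removes both the case split and the dependence on the attainment result imported from Ref.~\cite{haapasalo2014}, so your argument is more self-contained. Second, you obtain extremality by hand --- extreme points of the exposed face exist (Minkowski) and are extreme in $\sZ$, with exposedness making that inheritance immediate --- whereas the paper gets it abstractly from convexity of the norm. The paper's route buys brevity; yours buys independence from the earlier reference and a sharper picture of where the maximizers live (every extreme point of that exposed face works).

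One point you must patch in a careful write-up: Proposition~\ref{prop:attained} is stated only for $y\in ri(\sZ)$, and you apply it to arbitrary $y\in\sZ$. The implication you actually use, (iii)$\Rightarrow$(i), does hold for every $y\in\sZ$ with the paper's proof verbatim (the chain $2(1-b(y))\ge\|x-y\|_B\ge 2\<e-f,y-x\>=2(1-b(y))$ never uses interiority), but the proposition as a whole does not extend: for $y\in\partial\sZ$ condition (ii) no longer implies (iii), since that step of the paper's proof relies on $b(y)>0$. So either state and reprove the implication (iii)$\Rightarrow$(i) for general $y$, or restrict your citation of Proposition~\ref{prop:attained} to interior $y$ and note that your construction handles boundary points directly.
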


\begin{proof}
Note first that since $x\mapsto \|y-x\|_B$ is a convex function, the supremum
over $\sZ$ is attained at some $x_0\in ext(\sZ)$. It is therefore enough to
prove that  equality in (\ref{eq:upbound1}) holds for some $x\in \sZ$. If $y$ is
an interior point, then by Proposition \ref{prop:attained}, the equality is
attained for any $x$ such that $t_y(x)=b(y)$, and we know from the results in
\cite{haapasalo2014} that this is achieved in
$B$. If $y\in \partial \sZ$, then there exists some $f\in C^*$, $\|f\|_e=1$ such
that $\<f,y\>=0$ and since $e-f\in \partial C^*$, there is some $x\in \sZ$ such
that $\<e-f,x\>=0$. Then
\[
2\ge \|y-x\|_B\ge 2\<e-f,y-x\>=2=2(1-b(y)).
\]
\end{proof}

\section{Boundariness for quantum channels}

In Ref.~\cite{haapasalo2014} it was shown that the inequality
(\ref{eq:upbound1}) is saturated for states and observables, however, the case
of channels remained open. Theorem~\ref{thm:tightness} shows that this
saturation holds also in this remaining case. In particular, for any interior
point $Y\in\cal{Q}$, where $\cal{Q}$ is either the set of quantum states, or
channels, or observables, the identity holds
$$
||X-Y||_B=2(1-b(Y) )\,,
$$
for a suitable $X\in ext(\cal{Q})$. In what follows we will make a bit stronger
and surprising observation that $X$ needs to be a unitary channel. We will prove
a theorem indicating that unitary channels are somehow special from the
perspective of boundariness and minimum-error discrimination.

\begin{lemma}
\label{lem:optsfd}
Let $D$ be a positive operator on $\cH_d\otimes\cH_d$ and
define
\begin{equation}\label{eqn:def-S}
\mathcal{R} = \left\{\ket{y}\in\cH_d\otimes\cH_d:
{\rm tr}_1\ket{y}\bra{y}\leq \frac{1}{d}I\right\}.
\end{equation}
Denote by $\ket{y_D}\in\mathcal{R}$ a vector which maximizes the overlap with
$D$, i.e. $\bra{y_D} D \ket{y_D} = \max_{\ket{y} \in \mathcal{R}} \bra{y} D
\ket{y}$. Then $\ket{y_D}$ is a unit vector, hence it is maximally entangled.
\end{lemma}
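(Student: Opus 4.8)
The plan is to translate the constrained optimization over $\mathcal{R}$ into a problem over matrices and then invoke convexity. First I would identify a vector $\ket{y}=\sum_{ij}C_{ij}\ket{i}\otimes\ket{j}$ with its coefficient matrix $C=(C_{ij})$. A direct computation shows that ${\rm tr}_1\ketbra{y}{y}$ equals $\overline{C^\dagger C}$ (the entrywise conjugate of $C^\dagger C$), which is Hermitian with the same spectrum as $C^\dagger C$. Hence the defining condition ${\rm tr}_1\ketbra{y}{y}\le\frac1d I$ is equivalent to $C^\dagger C\le\frac1d I$, i.e. to the operator-norm bound $\|C\|\le d^{-1/2}$, so $\mathcal{R}$ is exactly the image of the operator-norm ball of radius $d^{-1/2}$ in the space of $d\times d$ matrices. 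Moreover $\ket{y}$ is a unit vector precisely when ${\rm tr}_1\ketbra{y}{y}$ has unit trace; together with ${\rm tr}_1\ketbra{y}{y}\le\frac1d I$ this forces ${\rm tr}_1\ketbra{y}{y}=\frac1d I$, i.e. $\ket{y}$ is maximally entangled and $\sqrt{d}\,C$ is unitary. The lemma thus reduces to showing that the overlap is maximized by a scaled unitary matrix.

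Next I would exploit convexity. Writing the objective as $\bra{y}D\ket{y}$, and noting that $C\mapsto\ket{y}$ is a $\mathbb{C}$-linear isomorphism, it is a nonnegative Hermitian quadratic form in $C$ (because $D\ge 0$), hence a convex continuous function of $C$ in real coordinates. Since the feasible set $\{C:\|C\|\le d^{-1/2}\}$ is convex and compact, the maximum is attained at an extreme point of this set. It therefore remains to determine the extreme points of the operator-norm ball.

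The key classical fact is that the extreme points of the operator-norm unit ball of $d\times d$ matrices are precisely the unitaries, so the extreme points of the rescaled ball are the matrices $d^{-1/2}U$ with $U$ unitary. I would justify this through the singular value decomposition $C=\sum_k s_k\ketbra{u_k}{v_k}$: if some singular value satisfies $s_k<d^{-1/2}$ (the case $s_k=0$ being covered since $C$ is square, so the families $\{\ket{u_k}\},\{\ket{v_k}\}$ are complete bases), then $C\pm\eps\,\ketbra{u_k}{v_k}$ both lie in the ball for small $\eps>0$ and average to $C$, so $C$ is not extreme; conversely a scaled unitary is extreme by the standard observation that $\|U\psi\|$ is maximal for every unit vector $\ket{\psi}$. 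Consequently the maximizer has the form $d^{-1/2}U$, which is a unit vector and maximally entangled, proving the lemma.

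The main obstacle is precisely this extreme-point characterization of the operator-norm ball; the reduction of the partial-trace constraint and the convexity of the objective are routine. I would further remark that when $D$ is strictly positive the quadratic form is strictly convex, so that \emph{every} maximizer must be an extreme point and hence a scaled unitary, whereas for a general $D\ge 0$ the argument yields the existence of a maximally entangled maximizer, which is what the subsequent results require.
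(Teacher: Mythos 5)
Your proof is correct, but it takes a genuinely different route from the paper's. The paper argues directly on the Schmidt decomposition $\ket{y_D}=\sum_j\sqrt{\mu_j}\ket{e_j}\ket{f_j}$: assuming some $\mu_k<1/d$, it flips the sign of $\ket{e_k}$ so that the cross terms $\sum_{j\neq k}\sqrt{\mu_j}\,\mathrm{Re}\bra{e_k f_k}D\ket{e_j f_j}$ become non-negative and then raises $\mu_k$ to $1/d$, exhibiting an explicit element of $\mathcal{R}$ with strictly larger overlap --- a constructive, self-contained perturbation. You instead reformulate: identifying $\ket{y}$ with its coefficient matrix $C$, the constraint becomes $\|C\|\le d^{-1/2}$, the objective is a convex quadratic form, so the maximum is attained at an extreme point of the operator-norm ball (Bauer's maximum principle), and the extreme points are exactly the scaled unitaries, which you establish by your own SVD perturbation. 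Both arguments rest on the same hidden hypothesis: the lemma as literally stated fails for merely positive semidefinite $D$ (for $D=\ketbra{00}{00}$ the non-unit vector $d^{-1/2}\ket{00}$ is a maximizer), and the paper's proof silently invokes definiteness when it declares the factor $\bra{e_k f_k}D\ket{e_k f_k}$ ``strictly positive''; your write-up is cleaner here, since you isolate exactly where definiteness enters (strict convexity, forcing \emph{every} maximizer to be extreme). One caution: your closing claim that mere existence of a maximally entangled maximizer ``is what the subsequent results require'' is not accurate --- Theorem \ref{thbfore} uses the lemma precisely in its every-maximizer form to conclude that any optimal decomposition of $\cF$ must contain a unitary channel; fortunately $D=J_\cF^{-1}$ is invertible there, so your strict-convexity case covers the application and nothing is lost. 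As for what each approach buys: the paper's proof is elementary and needs no outside facts, while yours exposes the convex-geometric mechanism (maximally entangled vectors as the extreme points of $\mathcal{R}$, mirroring unitaries as extreme channels) and transfers immediately to any constraint set whose extreme points one can characterize.
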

\begin{proof}
Let us note that $\ket{y}\in\mathcal{R}$ is normalized to one if and only if
$\ket{y}$ is maximally entangled, i.e. ${\rm tr}_1\ket{y}\bra{y}=\frac{1}{d}I$.
Suppose $\ket{y_D}$ has the following Schmidt decomposition $\ket{y_D}=\sum_j
\sqrt{\mu_j} \ket{e_j}\ket{f_j}$ and assume that for
some $k$ we have $\mu_k < 1/d$, thus it is not normalized. Then
\begin{align}
\bra{y_D} D \ket{y_D}
=&\, \mu_k \bra{e_k f_k} D \ket{e_k f_k} +
\sum_{j,l\neq k} \sqrt{\mu_j \mu_l} \bra{e_j f_j} D \ket{e_l f_l} \nonumber \\
& + 2 \sqrt{\mu_k} \sum_{j\neq k} \sqrt{\mu_j} Re \bra{e_k f_k} D \ket{e_j f_j}\,. \nonumber
\end{align}

In what follows we will construct a vector from $\mathcal{R}$ which has a
greater overlap with $D$. First, we introduce vector $\ket{\tilde{e}_k}$ which
differs from $\ket{e_k}$ only by a sign
\begin{equation}
\ket{\tilde{e}_k} = \mathrm{sgn}_+\left(\sum_{j\neq k} \sqrt{\mu_j} Re \bra{e_k f_k} D \ket{e_j f_j}\right) \ket{e_k},
\end{equation}
where $\mathrm{sgn}_+(x)$ equals to $1$ for non-negative $x$ and $-1$ for
negative $x$. Using this vector we write
\begin{equation}
\begin{split}
&
\mu_k \bra{e_k f_k} D \ket{e_k f_k} +
2 \sqrt{\mu_k} \sum_{j=1, j\neq k}^d \sqrt{\mu_j} Re \bra{e_k f_k} D \ket{e_j f_j}\\
&\leq
\mu_k \bra{e_k f_k} D \ket{e_k f_k} +
2 \sqrt{\mu_k} \left| \sum_{j=1, j\neq k}^d \sqrt{\mu_j} Re \bra{e_k f_k} D \ket{e_j f_j} \right|\\
&=
\mu_k \bra{\tilde{e}_k f_k} D \ket{\tilde{e}_k f_k} +
2 \sqrt{\mu_k} \sum_{j=1, j\neq k}^d \sqrt{\mu_j} Re \bra{\tilde{e}_k f_k} D \ket{e_j f_j}.
\end{split}
\end{equation}
In the last line above, $\mu_k$ is multiplied by strictly positive factor ($D$
is a positive matrix) and $\sqrt{\mu_k}$ is multiplied by a non-negative factor,
so we will (strictly) increase the value of the products if we replace $\mu_k$
with $\frac1d$. Finally we obtain
\begin{equation}
\begin{split}
\bra{y} D \ket{y} < \bra{\tilde{y}} D \ket{\tilde{y}},
\end{split}
\end{equation}
for $\ket{\tilde{y}} = \sum_{i=1,i\neq k}^d \sqrt{\mu_i} \ket{e_i f_i} +
\sqrt{\frac1d} \ket{\tilde{e}_k f_k}$. Since $\ket{\tilde{y}} \in \mathcal{R}$,
we obtained a contradiction.
\end{proof}

\begin{theorem}
\label{thbfore}
Suppose $\cF$ is an interior element of the set of channels $\cal{Q}$. Then
\begin{equation}
\label{eq:bforf}
b(\cF) = \left[{\max_{\cU} \lambda_1(J^{-1}_\cF J_\cU)}\right]^{-1}
 = \frac{d}{\max_{U} \bb U |J^{-1}_\cF |U\kk} \,,
\end{equation}
where the optimization runs over all unitary channels $\cU:\rho\mapsto U\rho
U^\dagger$ and $|U\kk=(U\otimes I) \sum_j \ket{jj}$. Moreover, if
$\cF=b(\cF)\,\mathcal{E} + (1-b(\cF))\, \mathcal{G}$ for some $\mathcal{E}\in
\cal{Q}$, $\mathcal G\in \partial \cal{Q}$, then $\mathcal{E}$ must be a unitary
channel.
\end{theorem}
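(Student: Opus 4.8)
The plan is to exploit the general results established earlier for arbitrary bases of positive cones, specializing them to the channel cone via the Choi--Jamio\l kowski isomorphism, and then to feed in Lemma~\ref{lem:optsfd} to force the extremal element to be unitary. Concretely, I would set $V$ to be the real space of Hermitian operators on $\cH_d\otimes\cH_d$ with the positive cone $C$ of positive semidefinite operators, and take $e$ to be the order unit whose induced base $B$ is precisely the set of Choi matrices of channels (the normalization ${\rm tr}_1 J_\cE=\frac1d I$ is encoded by the appropriate choice of $e$, namely $e=\frac1d I\otimes(\,\cdot\,)$ acting through the partial trace constraint). With this identification, Proposition~\ref{prop:evalbd1} already gives $b(\cF)=\min\{\<f,J_\cF\>: f\in C^*,\ \|f\|_e=1\}$, so the first task is to translate the functional $f$ and the constraint $\|f\|_e=1$ into the operator language appearing in \eqref{eq:bforf}.

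First I would unpack the order unit norm. Since $C^*$ is again the PSD cone, a positive functional is $f(\cdot)=\Tr[F\,\cdot\,]$ for some positive $F$, and the condition $\|f\|_e=1$ says that $F$ lies on the boundary of the region cut out by $\lambda e\pm f\in C^*$; dualizing, the minimum in Proposition~\ref{prop:evalbd1} becomes a maximization over the set $\mathcal R$ of \eqref{eqn:def-S}. The key computation is to show that $\<f,y\>\ge b(\cF)$ rearranges into the Rayleigh-type quotient $b(\cF)=\big[\max_{\ket y\in\mathcal R}\bra y J_\cF^{-1}\ket y\big]^{-1}$, where $J_\cF^{-1}$ makes sense because $\cF$ is an interior point and hence $J_\cF$ is invertible. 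The passage from the abstract dual formulation to this explicit inverse-Choi expression is where I expect the main technical bookkeeping to lie: one must verify that the supporting functional $f$ realizing the minimum corresponds exactly to the rank-one operator $\ket{y}\bra{y}/\bra y J_\cF^{-1}\ket y$ built from the optimal $\ket y$, and that the normalization constraint ${\rm tr}_1\ket y\bra y\le\frac1d I$ is precisely the membership $e-f\in C^*$ re-expressed.

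Having obtained $b(\cF)^{-1}=\max_{\ket y\in\mathcal R}\bra y J_\cF^{-1}\ket y$, I would invoke Lemma~\ref{lem:optsfd} with $D=J_\cF^{-1}$ (which is positive since $J_\cF$ is positive and invertible). The lemma forces the maximizing $\ket{y_D}$ to be a unit, hence maximally entangled, vector; writing $\ket{y_D}=|U\kk/\sqrt d$ for a unitary $U$ then yields $\max_{\ket y\in\mathcal R}\bra y J_\cF^{-1}\ket y=\frac1d\max_U\bb U|J_\cF^{-1}|U\kk$, which is the second equality in \eqref{eq:bforf}. The identification $\lambda_1(J_\cF^{-1}J_\cU)=\bb U|J_\cF^{-1}|U\kk$ (up to the factor $d$ coming from $J_\cU=\frac1d|U\kk\bb U|$) gives the first equality. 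This is the step I regard as the crux: everything hinges on Lemma~\ref{lem:optsfd} converting an optimization over the enlarged set $\mathcal R$ into one over genuine unitary channels.

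Finally, for the ``moreover'' clause, suppose $\cF=b(\cF)\,\mathcal E+(1-b(\cF))\,\mathcal G$ with $\mathcal G\in\partial\cal Q$. By Proposition~\ref{prop:attained} applied to $x=\mathcal E$ (note $t_{\cF}(\mathcal E)=b(\cF)$ here because $\mathcal G$ is on the boundary and the weight is optimal), condition (iii) furnishes a functional $f$ with $\|f\|_e=1$, $\<f,J_\cF\>=b(\cF)$, and $\<f,J_{\mathcal E}\>=1$. The functional attaining the minimum in Proposition~\ref{prop:evalbd1} is exactly the one built from the optimal $\ket{y_D}=|U\kk/\sqrt d$, so $f(\cdot)=\Tr[\ket{y_D}\bra{y_D}\,\cdot\,]/\bra{y_D}J_\cF^{-1}\ket{y_D}$ up to normalization; the condition $\<f,J_{\mathcal E}\>=1$ then says $J_{\mathcal E}$ saturates the support of this rank-one $\ket{y_D}\bra{y_D}$, forcing $J_{\mathcal E}=\ket{y_D}\bra{y_D}=\frac1d|U\kk\bb U|=J_\cU$, i.e. $\mathcal E=\cU$ is unitary. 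I expect the uniqueness of the optimal $\ket{y_D}$ (guaranteed to be maximally entangled by Lemma~\ref{lem:optsfd}) to be what pins $\mathcal E$ down to a single unitary channel rather than merely a convex combination of boundary channels.
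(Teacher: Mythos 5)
Your endgame (reduce everything to $\max_{\ket{y}\in\mathcal{R}}\bra{y}J_\cF^{-1}\ket{y}$ and invoke Lemma~\ref{lem:optsfd}) coincides with the paper's, but the duality route you propose for getting there has concrete gaps. First, the cone-theoretic setup fails: the Choi matrices of channels do \emph{not} form a base of the full PSD cone on $\cH_d\otimes\cH_d$, because the normalization $\mathrm{tr}_1 J_\cE=\frac{1}{d}I$ consists of $d^2-1$ independent linear constraints and cannot be encoded by any single order unit $e$ (a base of the full PSD cone has affine dimension $d^4-1$, while the channel set has $d^4-d^2$). One must instead take $V=\{X=X^\dagger:\mathrm{tr}_1X\propto I\}$ and $C=V\cap\{X\geq0\}$, whereupon $C^*$ consists of functionals modulo $V^\perp=\{I\otimes B:\Tr B=0\}$ and is \emph{not} ``again the PSD cone''. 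Second, the central identity --- that the minimum in Proposition~\ref{prop:evalbd1} equals $\bigl[\max_{\ket{y}\in\mathcal{R}}\bra{y}J_\cF^{-1}\ket{y}\bigr]^{-1}$ --- is precisely what you defer as ``technical bookkeeping'', and your sketch of it is wrong: the optimal dual functional is not $\ketbra{y}{y}/\bra{y}J_\cF^{-1}\ket{y}$, since that gives $\<f,J_\cF\>=\bra{y}J_\cF\ket{y}/\bra{y}J_\cF^{-1}\ket{y}$, which equals $b(\cF)$ only if $J_\cF=\ketbra{y}{y}$, impossible for interior $\cF$; the correct functional is rank one along the vector $J_\cF^{-1}\ket{y}$, the kernel vector of the complementary boundary piece $J_{\mathcal{G}}\propto J_\cF-b(\cF)\ketbra{y}{y}$. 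Likewise $\mathcal{R}$ is not the dual constraint $e-f\in C^*$ in disguise: in the paper it enters on the \emph{primal} side, via the two elementary facts your proposal never uses, namely $t_\cF(\cE)=1/\lambda_1(J_\cF^{-1}J_\cE)$ (a direct Rayleigh-quotient computation requiring no duality) and $\sqrt{J_\cE}\ket{x}\in\mathcal{R}$ for every channel $\cE$ and unit vector $\ket{x}$; these give $b(\cF)\geq\bigl[\max_{\ket{y}\in\mathcal{R}}\bra{y}J_\cF^{-1}\ket{y}\bigr]^{-1}$ at once, with equality by taking $\cE$ unitary.

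The ``moreover'' clause is also not established. Your argument hinges on the minimizing functional of Proposition~\ref{prop:evalbd1} being unique and rank one, pinned down by ``the'' optimal $\ket{y_D}$. But Lemma~\ref{lem:optsfd} asserts only that every maximizer over $\mathcal{R}$ is maximally entangled, not that it is unique: for the completely depolarizing channel \emph{every} maximally entangled vector is optimal, so there are many minimizing functionals, and the $f$ supplied by Proposition~\ref{prop:attained}(iii) for a given decomposition need not be any of your rank-one candidates. Moreover, even granting a rank-one $f$ built from a vector $\ket{v}$, the saturation $\<f,J_{\mathcal{E}}\>=1=\|f\|_e$ says only that $\mathcal{E}$ maximizes $\<f,\cdot\>$ over channels, which by itself does not force $J_{\mathcal{E}}$ to be rank one (if $\ket{v}$ were a product vector, non-unitary erasure channels would also be maximizers); the missing step is an analysis of which channels achieve this maximum. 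The paper avoids all of this by staying primal: if $\cE$ is not unitary then $J_\cE$ is not rank one, hence $\|\sqrt{J_\cE}\ket{x}\|<1$ for every unit $\ket{x}$, so by Lemma~\ref{lem:optsfd} no vector $\sqrt{J_\cE}\ket{x}$ attains the maximum over $\mathcal{R}$; consequently $t_\cF(\cE)=1/\lambda_1(J_\cF^{-1}J_\cE)>b(\cF)$, and $\cE$ cannot appear in any optimal two-term decomposition.
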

\begin{proof}
Let us denote by $J_\cE,  J_\cF$ Choi-Jamiolkowski operators for channels $\cE$
and $\cF$, respectively. We assume $\cF$ is an interior element, thus, $J_{\cF}$
is invertible. Then $t_\cF(\cE)=\sup\{0\leq t<1, J_\cF-tJ_\cE\geq 0\}$. It
follows that for all $\ket{x}$, $\bra{x}J_{\cF}\ket{x} \geq  t \bra{x} J_{\cE}
\ket{x}$. Setting $\ket{y} = \sqrt{J_{\cF}} \ket{x}$ we obtain
\begin{align}
\label{eq:lbont}
\frac{1}{t} \geq \frac{\bra{y}\sqrt{J_{\cF}}^{-1} J_{\cE}
\sqrt{J_{\cF}}^{-1} \ket{y}}{\scalar{y}{y}}.
\end{align}
The maximum value of the right hand side equals
$\lambda_1(\sqrt{J_{\cF}}^{-1} J_{\cE} \sqrt{J_{\cF}}^{-1})
= \lambda_1(J_{\cF}^{-1} J_{\cE})=\lambda_1(\sqrt{J_{\cE}}J_{\cF}^{-1}\sqrt{J_{\cE}})$,
where $\lambda_1(X)$ denotes the maximal eigenvalue of $X$.
In conclusion, $t_\cF(\cE)=1/\lambda_1(J^{-1}_\cF J_\cE)$ and
\begin{equation}\label{eqn:formula-for-b}
b(\cF) = \inf_{\cE} t_\cF(\cE) = \left[{\max_{\cE} \lambda_1(J^{-1}_\cF J_\cE)}\right]^{-1}\,,
\end{equation}
where the optimization runs over all channels.

For any Choi-Jamio\l{}kowski state $J_\cE$ and an arbitrary unit vector
$\ket{x}\in\cH_d\otimes\cH_d$ we have $\sqrt{J_\cE} \ketbra{x}{x}\sqrt{J_\cE}  \leq  J_\cE$.
The complete positivity of partial trace implies
${\rm tr}_1 \left( J_\cE - \sqrt{J_\cE} \ketbra{x}{x}\sqrt{J_\cE} \right) \geq 0$,
and since ${\rm tr}_1 J_\cE  = \frac{1}{d} I$ it follows
$${\rm tr}_1 \sqrt{J_\cE} \ketbra{x}{x} \sqrt{J_\cE} \leq \frac{1}{d} I\,.$$
In other words, $\sqrt{J_\cE} \ket{x}\in\mathcal{R}$ defined in Lemma~\ref{lem:optsfd}.
Consequently, $\lambda_1(J_\cF^{-1} J_\cE)=
\max_{\ket{x}}\bra{x}\sqrt{J_\cE} J_\cF^{-1}\sqrt{J_\cE}\ket{x}
\leq\max_{\ket{y}\in\mathcal{R}} \bra{y}J_\cF^{-1}\ket{y}$
for every channel $\cE$
and using Eq. (\ref{eqn:formula-for-b}) we obtain
\begin{align}
\label{eq:lbforb}
b(\cF) = \left[\max_{\cE,\ket{x}}\bra{x}\sqrt{J_\cE} J_\cF^{-1}\sqrt{J_\cE}\ket{x}\right]^{-1}\geq \left[\max_{\ket{y}\in\mathcal{R}} \bra{y}J_\cF^{-1}\ket{y}\right]^{-1}.
\end{align}
Since $J_\cF^{-1}$ is a positive operator Lemma \ref{lem:optsfd} implies that
the maximum over $\ket{y}$ is achieved only by unit (hence maximally entangled)
vectors. For every such vector $\ket{y_\cF}$ there exists a unitary matrix $U$
such that $\ket{y_\cF}=\frac{1}{\sqrt{d}}\sum_j U\ket{j}\otimes \ket{j}$.
Moreover, choice of $\ket{x}=\ket{y_\cF}$, $\cE=\cU$, where
$J_{\cU}=\ket{y_\cF}\bra{y_\cF}$ proves that the lower bound (\ref{eq:lbforb})
is tight. Finally, the achievability of maximum on the right hand side of
Eq.(\ref{eq:lbforb}) requires by Lemma \ref{lem:optsfd} that the norm of
$\sqrt{J_\cE}\ket{x}$ is one, which in turn implies that $\cE$ is a unitary
channel. Otherwise $t_\cF(\cE)> b(\cF)$ (see Eq. (\ref{eqn:formula-for-b})) and
decompositions of the form $\cF=b(\cF)\cE+(1-b(\cF))\mathcal{G}$
($\mathcal{G}\in\partial\cal{Q}$) can not exist.
\end{proof}

\begin{corollary}
Suppose $\cF$ is an interior element of the set of channels. Then there exist a
unitary channel $\cU$ such that $||\cF-\cU||_{B}=2(1-b(\cF))$. Moreover, if
$\cE\in \cal{Q}$ is not a unitary channel, then $\|\cF-\cE\|_B<2(1-(b(\cF))$.
\end{corollary}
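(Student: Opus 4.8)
The plan is to deduce both assertions from Proposition~\ref{prop:attained}, which for an interior point $\cF$ equates $\|\cF-\cE\|_B=2(1-b(\cF))$ with the condition $t_\cF(\cE)=b(\cF)$. The whole corollary then reduces to identifying the channels $\cE$ at which the weight $t_\cF(\cE)$ attains its minimal value $b(\cF)$, and for this I would use the formula $t_\cF(\cE)=1/\lambda_1(J_\cF^{-1}J_\cE)$ established in the proof of Theorem~\ref{thbfore}.

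For the existence statement, I would choose $\cU$ to be a unitary channel that attains the maximum in the first expression for $b(\cF)$ in Eq.~\eqref{eq:bforf}, so that $\lambda_1(J_\cF^{-1}J_\cU)=1/b(\cF)$. Substituting into $t_\cF(\cU)=1/\lambda_1(J_\cF^{-1}J_\cU)$ gives $t_\cF(\cU)=b(\cF)$, and the implication (ii)$\Rightarrow$(i) of Proposition~\ref{prop:attained} then yields $\|\cF-\cU\|_B=2(1-b(\cF))$ for this $\cU$.

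For the strict inequality I would argue by contraposition, so it is enough to show that a non-unitary $\cE$ forces $t_\cF(\cE)>b(\cF)$: indeed, the general bound \eqref{eq:upbound1} gives $\|\cF-\cE\|_B\le 2(1-b(\cF))$ always, while failure of condition (ii) in Proposition~\ref{prop:attained} rules out equality, leaving the strict bound. To obtain $t_\cF(\cE)>b(\cF)$ I would revisit the chain in \eqref{eq:lbforb}: writing $\lambda_1(J_\cF^{-1}J_\cE)=\max_{\ket{x}}\bra{x}\sqrt{J_\cE}J_\cF^{-1}\sqrt{J_\cE}\ket{x}$ and using $\sqrt{J_\cE}\ket{x}\in\mathcal{R}$, an equality $\lambda_1(J_\cF^{-1}J_\cE)=1/b(\cF)=\max_{\ket{y}\in\mathcal{R}}\bra{y}J_\cF^{-1}\ket{y}$ would require the optimal $\sqrt{J_\cE}\ket{x}$ to be a maximizer over $\mathcal{R}$. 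Since $J_\cF^{-1}$ is a positive operator, the ``only''-part of Lemma~\ref{lem:optsfd} says every such maximizer is a unit vector, whence $\bra{x}J_\cE\ket{x}=1$; combined with $\sqrt{J_\cE}\ketbra{x}{x}\sqrt{J_\cE}\le J_\cE$ and $\Tr J_\cE=1$, the equality of traces forces $J_\cE=\sqrt{J_\cE}\ketbra{x}{x}\sqrt{J_\cE}$, a rank-one Choi operator, i.e. $\cE$ unitary. Contrapositively, non-unitary $\cE$ gives the strict inequality $\lambda_1(J_\cF^{-1}J_\cE)<1/b(\cF)$, hence $t_\cF(\cE)>b(\cF)$.

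The one genuinely delicate step is this strictness argument, and it is worth noting that it is already carried out inside the proof of Theorem~\ref{thbfore}; the corollary is essentially a repackaging of that argument together with Proposition~\ref{prop:attained}. The point that does the real work is the uniqueness flavour of Lemma~\ref{lem:optsfd} --- that a positive $D$ is maximized over $\mathcal{R}$ only at normalized (hence maximally entangled) vectors --- which is exactly what upgrades the non-strict $t_\cF(\cE)\ge b(\cF)$ to a strict inequality away from the unitary channels.
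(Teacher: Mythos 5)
Your proof is correct and takes essentially the same route as the paper: the paper's own proof of the corollary is a one-line combination of Proposition~\ref{prop:attained} and Theorem~\ref{thbfore}, with the strictness for non-unitary $\cE$ residing inside the proof of Theorem~\ref{thbfore} (via Lemma~\ref{lem:optsfd}) exactly as you reconstruct it. Your trace argument showing that $\|\sqrt{J_\cE}\ket{x}\|=1$ forces $J_\cE=\sqrt{J_\cE}\ketbra{x}{x}\sqrt{J_\cE}$, hence rank one and unitary, is a welcome filling-in of a step the paper asserts without detail.
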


\begin{proof}
Combining Proposition \ref{prop:attained} and Theorem \ref{thbfore} we conclude
that  the equality $||\cF-\cU||_{B}=2(1-b(\cF))$ holds precisely for  unitary
channels $\cU$ such that $\frac{b(\cF)}{d}=\bb U |J^{-1}_\cF |U\kk ^{-1}$
\end{proof}

In what follows we will explicitly evaluate the boundariness formula
determined in Eq.~(\ref{eq:bforf}) for the families of
qubit and erasure channels (on arbitrary dimensional system).

\subsection{Qubit channels}
\begin{theorem}
\label{thbforqubit}
Suppose $\cF$ is an interior element of the set of qubit channels.
Then
\begin{equation}
\label{eq:bforfqubit}
b(\cF) =\frac{2}{
\lambda_1
\left(
W^{\dagger} J_{\cF}^{-1} W + (W^{\dagger} J_{\cF}^{-1} W)^T
\right)
}\, ,
\end{equation}
where $W$ is a unitary matrix (called sometimes a Magic
Basis)~\cite{hill1997entanglement}
\begin{equation}
W = \frac{1}{\sqrt{2}}
\left(
\begin{smallmatrix}
0 & 0 & 1 & \ii \\
-1 & \ii & 0 & 0 \\
1 & \ii & 0 & 0 \\
0 & 0 & 1 & -\ii
\end{smallmatrix}
\right).
\end{equation}
\end{theorem}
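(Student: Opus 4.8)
The plan is to specialize the general formula of Theorem~\ref{thbfore} to the qubit case $d=2$ and then exploit the special structure of maximally entangled two-qubit states encoded by the magic basis $W$. By Theorem~\ref{thbfore} together with Lemma~\ref{lem:optsfd},
\[
b(\cF)=\Big[\max_{\ket{y}\in\mathcal{R}}\bra{y}J_\cF^{-1}\ket{y}\Big]^{-1},
\]
where the maximum is attained only at unit, hence maximally entangled, vectors $\ket{y}$; writing such a vector as $\ket{y}=\tfrac{1}{\sqrt2}|U\kk$ and using $\bb U|J_\cF^{-1}|U\kk=2\bra{y}J_\cF^{-1}\ket{y}$ recovers Eq.~(\ref{eq:bforf}) for $d=2$. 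Thus the whole task reduces to evaluating $\max_{\ket{y}}\bra{y}J_\cF^{-1}\ket{y}$ over the set of maximally entangled two-qubit states.

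The key ingredient is the characterization of maximally entangled states through the magic basis, whose four vectors are precisely the columns of $W$. First I would recall, via the concurrence $C=\big|\sum_k c_k^2\big|$ of a normalized state $\ket{y}=\sum_k c_k\,(W\ket{k})$, that such a state is maximally entangled ($C=1$) if and only if its coordinates in the magic basis are, up to a global phase, a real unit vector. Concretely, every maximally entangled $\ket{y}$ can be written as $\ket{y}=e^{i\phi}W\vec{r}$ with $\vec{r}\in\real^4$, $\|\vec{r}\|=1$, and conversely every such vector is maximally entangled. This is where the specific form of $W$ enters, and it is the conceptual heart of the argument; the remaining steps are linear algebra.

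Next I would substitute this parametrization. Setting $M=W^\dagger J_\cF^{-1}W$, which is Hermitian since $J_\cF^{-1}$ is Hermitian positive definite and $W$ is unitary, and using that $\vec{r}$ is real so that the global phase cancels,
\[
\bra{y}J_\cF^{-1}\ket{y}=\vec{r}^{\,T}M\,\vec{r}=\tfrac12\,\vec{r}^{\,T}\big(M+M^T\big)\vec{r}.
\]
Here the imaginary (antisymmetric) part of $M$ contributes nothing to a real quadratic form, and $M+M^T$ is twice the real symmetric part of $M$, hence a real symmetric $4\times4$ matrix. Maximizing a real symmetric quadratic form over the real unit sphere yields the top eigenvalue, so
\[
\max_{\ket{y}}\bra{y}J_\cF^{-1}\ket{y}=\tfrac12\,\lambda_1\big(M+M^T\big)=\tfrac12\,\lambda_1\big(W^\dagger J_\cF^{-1}W+(W^\dagger J_\cF^{-1}W)^T\big).
\]
Taking the reciprocal gives $b(\cF)=2/\lambda_1\big(W^\dagger J_\cF^{-1}W+(W^\dagger J_\cF^{-1}W)^T\big)$, as claimed.

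The main obstacle is the magic-basis step: one must justify carefully that maximizing over the \emph{non-convex} set of maximally entangled states is equivalent to maximizing $\vec{r}^{\,T}M\,\vec{r}$ over the full real unit sphere, i.e. that the real-coefficient parametrization is exactly onto the maximally entangled set (with the global phase being genuinely irrelevant). Once that equivalence is established, the reduction to the largest eigenvalue of a real symmetric matrix and the bookkeeping of the factor $2=d$ are routine.
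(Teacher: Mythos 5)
Your proof is correct, and its overall route is the same as the paper's: both start from Theorem~\ref{thbfore} (with the $d=2$ factor bookkeeping handled exactly as you do), reduce the problem to maximizing $\bra{y}J_\cF^{-1}\ket{y}$ over maximally entangled two-qubit states, and convert that maximum into the largest eigenvalue of the real symmetric part of $W^{\dagger}J_\cF^{-1}W$. The one genuine difference is how the magic-basis step is justified. The paper treats it as a black box: it cites \cite{dunkl2014real} for the statement that the maximally entangled numerical range of a $4\times 4$ matrix $A$ equals the real numerical range of $W^{\dagger}AW$, and reads off the radius as the top eigenvalue of the symmetric part. You instead prove that statement: via the Hill--Wootters concurrence $C=\bigl|\sum_k c_k^2\bigr|$ in the magic basis \cite{hill1997entanglement}, the equality $C=1=\sum_k|c_k|^2$ forces all $c_k^2$ to share a common phase, so the maximally entangled states are exactly $e^{i\phi}W\vec{r}$ with $\vec{r}\in\real^4$ a unit vector; the global phase cancels in the quadratic form, the antisymmetric part of $M=W^{\dagger}J_\cF^{-1}W$ contributes nothing to $\vec{r}^{\,T}M\vec{r}$, and maximizing over the real unit sphere gives $\tfrac12\lambda_1(M+M^T)$. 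What the paper's citation buys is brevity; what your argument buys is self-containedness, since the reader needs only the concurrence formula (which the paper invokes anyway to introduce $W$) rather than the numerical-range result of \cite{dunkl2014real}. You also correctly identified the onto-ness of the real parametrization of the maximally entangled set as the point requiring care; your triangle-inequality argument settles it.
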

\begin{proof}
For any qubit channel $\cF$ with Choi-Jamio\l{}kowski state $J_{\cF}$, boundariness $b(\cF)$
is given by (see Eq.~\eqref{eq:bforf})
\begin{equation}
\label{eq:numrange}
b(\cF) =\frac{1}{\max_{\psi\in\cS_{ME}} \bra{\psi}J^{-1}_\cF \ket{\psi} } \equiv \frac{1}{r^{\mathrm{ent}}\left(J^{-1}_{\cF}\right)},
\end{equation}
where $\cS_{ME}=\left\{\ket{\psi}\in \cH_d\otimes\cH_d\,|\,{\rm
tr}_1\ket{\psi}\bra{\psi}=\frac{1}{d}\,I \right\}$ and $r^{\mathrm{ent}}(A)$ is
a maximally entangled numerical radius for
matrix $A$.
We know from the literature~\cite{dunkl2014real}, that maximally entangled
numerical range for $4 \times 4$ matrix  $A$ is equal to real numerical range of
matrix $W^{\dagger} A W$.
From the above we note, that
\begin{equation}
r^{\mathrm{ent}}(J_{\cF}^{-1}) =
\lambda_1
\left(
\frac{W^{\dagger} J_{\cF}^{-1} W + (W^{\dagger} J_{\cF}^{-1} W)^T}{2}
\right),
\end{equation}
which together with Eq. (\ref{eq:numrange}) finishes the proof.
\end{proof}

In the case of qubit channel $\cF$ we can specify, the unitary channel $\cU$,
for which  $||\cF-\cU||_{B}=2(1-b(\cF))$. It follows from the reasoning above,
that unitary matrix $U$, which defines the channel, can be written as
\begin{equation}
| U \kk = \sqrt{2} W \ket{v}.
\end{equation}
Vector $\ket{v}$ above is the leading eigenvector of real symmetric matrix
$W^{\dagger} J^{-1}_\cF W + (W^{\dagger} J^{-1}_\cF W)^T$.

\subsection{Erasure channels}\label{sec:erasure-example}
Erasure channels transform any input state $\rho$ onto a fixed output state $\cF_\sigma(\rho) = \sigma$. For such channel $\cF_\sigma$ the Choi-Jamio\l{}kowski state reads
\begin{equation}
J_{\cF_\sigma} = \frac{1}{d} \sigma \otimes I.
\end{equation}

\begin{prop}
Boundariness of erasure channel $\cF_\sigma$, which maps everything to a fixed
interior point  $\sigma$ in the set of states $\cS(\cH_d)$, is given by
\begin{equation}
b(\cF_{\sigma})
=\frac{1}{\tr{\sigma^{-1}}}.
\end{equation}
\end{prop}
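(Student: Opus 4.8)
The plan is to apply the closed-form boundariness formula for interior channels established in Theorem~\ref{thbfore}, namely $b(\cF)=d/\max_{U}\bb U|J_\cF^{-1}|U\kk$ (Eq.~\eqref{eq:bforf}), directly to the erasure channel $\cF_\sigma$. First I would record the admissibility data: since $\sigma$ is an interior point of $\cS(\cH_d)$ it has full rank, so $\sigma^{-1}$ exists, and therefore $J_{\cF_\sigma}=\frac1d\,\sigma\otimes I$ is invertible with $J_{\cF_\sigma}^{-1}=d\,\sigma^{-1}\otimes I$. Invertibility of the Choi operator is exactly the condition under which $\cF_\sigma$ counts as an interior channel, so Theorem~\ref{thbfore} applies and it remains only to evaluate the maximum over unitary channels.

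Next I would compute the maximand explicitly. Writing $|U\kk=\sum_j U\ket{j}\otimes\ket{j}$ and inserting $J_{\cF_\sigma}^{-1}=d\,\sigma^{-1}\otimes I$, a short calculation gives
\begin{align}
\bb U|J_{\cF_\sigma}^{-1}|U\kk &= d\sum_{j,k}\bra{k}U^\dagger\sigma^{-1}U\ket{j}\,\scalar{k}{j} \nonumber \\
&= d\,\tr{U^\dagger\sigma^{-1}U}=d\,\tr{\sigma^{-1}}. \nonumber
\end{align}
The decisive observation is that, by cyclicity of the trace, this value is \emph{independent of $U$}: conjugating $\sigma^{-1}$ by a unitary leaves its trace unchanged. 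Hence the optimization over unitary channels collapses to a constant, $\max_{U}\bb U|J_{\cF_\sigma}^{-1}|U\kk=d\,\tr{\sigma^{-1}}$, and substituting into Eq.~\eqref{eq:bforf} yields $b(\cF_\sigma)=d/(d\,\tr{\sigma^{-1}})=1/\tr{\sigma^{-1}}$, as claimed.

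There is no genuine obstacle in this argument; the only step that deserves a word of care is the unitary-invariance observation, which is precisely what makes the erasure channel so transparent. Because $J_{\cF_\sigma}^{-1}$ has the product form $d\,\sigma^{-1}\otimes I$, every maximally entangled vector $|U\kk$ achieves the same overlap with it, so every unitary channel is simultaneously optimal in the decomposition of $\cF_\sigma$ into boundary elements. This reflects the rotational symmetry of $\cF_\sigma$ in its input: since the channel forgets the input entirely, no input direction, and hence no unitary, is singled out.
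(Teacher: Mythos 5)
Your proposal is correct and follows essentially the same route as the paper: both apply Theorem~\ref{thbfore} to $J_{\cF_\sigma}^{-1}=d\,\sigma^{-1}\otimes I$ and collapse the maximization over unitaries via cyclic invariance of the trace (together with $UU^\dagger=I$), yielding $b(\cF_\sigma)=1/\tr{\sigma^{-1}}$. Your added remarks on why interiority of $\sigma$ guarantees applicability of the theorem, and on the unitary-invariance reflecting the input symmetry of the erasure channel, are correct elaborations of steps the paper leaves implicit.
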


\begin{proof}
Since $\sigma$ is an interior element of the set of states,
$J^{-1}_{\cF_\sigma} = d\, \sigma^{-1} \otimes I$ is well defined.
Using theorem \ref{thbfore} we obtain
\begin{equation}
b(\cF_\sigma)=\frac{1}{\max_{U} \sum_{j,k} \bra{jj} (U^\dagger\sigma^{-1} U)\otimes I \ket{kk} }=\frac{1}{\tr{\sigma^{-1}}}, \nonumber
\end{equation}
where we used $U\,U^\dagger=I$ and the cyclic invariance of the trace.
\end{proof}
Let us note that in the special case of a qubit erasure channel $\cF_\sigma$ with $\sigma=p \ket{0}\bra{0}+(1-p)\ket{1}\bra{1}$ we find $b(\cF_\sigma)=p(1-p)$ in accordance with the results of \cite{haapasalo2014}.

\section{Boundariness under composition}
Suppose $\mathcal{E}, \mathcal{F}$ are channels on systems
described in Hilbert spaces $\cH_s$,$\cH_d$, respectively. Denote by
$b(\mathcal{E}), b(\cF)$ the values of their boundariness.
In this section we address the question of the relation between
the boundariness of channel composition, $b(\cE\otimes\cF)$, and
the boundariness for individual channels.

\begin{prop}\label{prop:tensor_general}
For channels the boundariness is sub-multiplicative, i.e.
$b(\mathcal{E}\otimes\mathcal{F})\leq b(\mathcal{E})b(\mathcal{F})$.
\end{prop}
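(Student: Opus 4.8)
The plan is to reduce the statement to the explicit boundariness formula established in Theorem~\ref{thbfore}, namely $b(\cF)=d\,\big[\max_U\bb U|J_\cF^{-1}|U\kk\big]^{-1}$ from Eq.~(\ref{eq:bforf}), and to exploit the multiplicativity of the Choi--Jamio\l{}kowski representation under tensor products. I would first dispose of the boundary case: if either $\cE$ or $\cF$ lies in $\partial\mathcal{Q}$, then its Choi operator is singular, hence so is $J_{\cE\otimes\cF}$, so $\cE\otimes\cF$ is itself a boundary channel and both sides of the claimed inequality vanish. Thus I may assume both $\cE$ and $\cF$ are interior, so that $J_\cE$, $J_\cF$, and therefore $J_{\cE\otimes\cF}$ are invertible and formula~(\ref{eq:bforf}) applies to all three.

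The central algebraic fact is that, after a fixed permutation $\Pi$ of the tensor factors which groups the two copies of $\cH_s$ together and the two copies of $\cH_d$ together, one has $J_{\cE\otimes\cF}=J_\cE\otimes J_\cF$ and hence $J_{\cE\otimes\cF}^{-1}=J_\cE^{-1}\otimes J_\cF^{-1}$; the normalization is consistent because $\frac{1}{sd}=\frac1s\cdot\frac1d$ in the definition of $\Psi_+$. Under the same permutation, a product unitary $W=U\otimes V$ on $\cH_s\otimes\cH_d$ satisfies $|U\otimes V\kk=|U\kk\otimes|V\kk$, as one checks directly by expanding $(U\otimes V)\sum_{j,a}\ket{ja}\otimes\ket{ja}$ and regrouping factors. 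Consequently the quadratic form factorizes: $\bb W|J_{\cE\otimes\cF}^{-1}|W\kk=(\bb U|J_\cE^{-1}|U\kk)(\bb V|J_\cF^{-1}|V\kk)$ for every such product $W$.

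Next I would bound the maximum over \emph{all} unitaries $W$ on the composite from below by the maximum over the smaller family of product unitaries $W=U\otimes V$, which can only decrease the value, obtaining $\max_W\bb W|J_{\cE\otimes\cF}^{-1}|W\kk\ge\big(\max_U\bb U|J_\cE^{-1}|U\kk\big)\big(\max_V\bb V|J_\cF^{-1}|V\kk\big)$. Inserting this into formula~(\ref{eq:bforf}) for the composite system of dimension $sd$ and splitting the prefactor as $sd=s\cdot d$ gives $b(\cE\otimes\cF)=sd\,[\max_W\ldots]^{-1}\le sd\,[(\max_U\ldots)(\max_V\ldots)]^{-1}=b(\cE)b(\cF)$, which is exactly the desired submultiplicativity.

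The only genuinely delicate point will be the bookkeeping of the factor reordering: I must verify that the single permutation $\Pi$ simultaneously realizes both $J_{\cE\otimes\cF}=J_\cE\otimes J_\cF$ and $|U\otimes V\kk=|U\kk\otimes|V\kk$, and that the dimension factor of the composite splits correctly so that the normalizations in~(\ref{eq:bforf}) match. Everything else is the elementary observation that restricting the optimization to product unitaries yields a lower bound on the maximum, which is precisely what turns an exact identity on product states into the submultiplicative inequality.
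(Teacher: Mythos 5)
Your proof is correct, but it takes a genuinely different route from the paper's. The paper argues directly from the definition of boundariness: it takes optimal two-term boundary decompositions $J_\cE=b(\cE)J_{\cE+}+[1-b(\cE)]J_{\cE-}$ and $J_\cF=b(\cF)J_{\cF+}+[1-b(\cF)]J_{\cF-}$, tensors them, and regroups the four terms as $b(\cE)b(\cF)\,J_{\cE+}\otimes J_{\cF+}+[1-b(\cE)b(\cF)]\,J_{\mathcal{T}}$; since a channel lies on the boundary iff its Choi operator has nontrivial kernel, both $J_{\cE+}\otimes J_{\cF+}$ and $J_{\mathcal{T}}$ (which annihilates $\ket{\varphi}\otimes\ket{\psi}$ for kernel vectors $\ket{\varphi}$ of $J_{\cE-}$ and $\ket{\psi}$ of $J_{\cF-}$) are boundary elements, so this exhibits a valid two-term boundary decomposition with weight $b(\cE)b(\cF)$ and the bound follows from Eq.~(\ref{eq:defbdorig}). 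You instead push everything through the explicit formula of Theorem~\ref{thbfore} together with the observation that restricting the unitary maximization to product unitaries can only decrease the maximum; this is legitimate (the theorem precedes the proposition and does not depend on it), your tensor-factorization bookkeeping is sound (one and the same permutation of factors realizes $J_{\cE\otimes\cF}\cong J_\cE\otimes J_\cF$ and $|U\otimes V\kk\cong|U\kk\otimes|V\kk$, so the quadratic form factorizes), and your separate disposal of the boundary case is both necessary and correct, since Theorem~\ref{thbfore} assumes interiority. What the paper's route buys: it is elementary, needs no case split, no invertibility, and no channel-specific spectral formula --- only the kernel characterization of the boundary --- so essentially the same construction proves sub-multiplicativity for states and observables as well. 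What your route buys: given Theorem~\ref{thbfore} it is shorter, and it makes transparent exactly where equality could fail --- multiplicativity holds iff the maximum of $\bb W|J_\cE^{-1}\otimes J_\cF^{-1}|W\kk$ over all unitaries $W$ of the composite is already attained on product unitaries, which is precisely the equivalent reformulation of the open conjecture that the paper states after Proposition~\ref{prop:tensor}.
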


\begin{proof}
Let us consider some decomposition of channels $\cE,\cF$ into boundary elements with the weight equal to their boundariness.
\begin{align}
J_\cE&=b(\cE)J_{\cE+}+[1-b(\cE)]J_{\cE-} \nonumber \\
J_\cF&=b(\cF)J_{\cF+}+[1-b(\cF)]J_{\cF-} \nonumber
\end{align}
This allows us to write:
\begin{align}
\label{eq:decomptp1}
J_\cE\otimes J_\cF&=b(\cE)\, b(\cF)\,J_{\cE+}\otimes J_{\cF+}+[1-b(\cE)\,b(\cF)\,]\,J_{\mathcal{T}}, \nonumber \\
\end{align}
where
\begin{align}
J_{\mathcal{T}}=& [1-b(\cE)b(\cF)]^{-1}\bigl( b(\cE)[1-b(\cF)]\,J_{\cE+}\otimes J_{\cF-} \nonumber \\
&+[1-b(\cE)]\,b(\cF)\, J_{\cE-}\otimes J_{\cF+} \nonumber \\
& +[1-b(\cE)]\,[1-b(\cF)]\, J_{\cE-}\otimes J_{\cF-}\bigr)
\end{align}
is a Choi-Jamiolkowski state of a channel.
Let us remind that a channel is on the boundary of the set of channels  if and
only if its Choi-Jamiolkowski state has non empty kernel (see e.g.
\cite{haapasalo2014}). It is easy to realize that if $\cE_+$ and  $\cF_+$ are
boundary elements of the respective sets of channels, $\cE_+\otimes \cF_+$ lies
on the boundary as well. Similarly, taking vectors $\ket{\varphi}, \ket{\psi}$
from the kernel of $J_{\cE_-}$, $J_{\cF_-}$, respectively, we can immediately
see that $\ket{\varphi}\otimes\ket{\psi}$ belongs to the kernel of
$J_\mathcal{T}$. This shows that Eq. (\ref{eq:decomptp1}) provides a valid
convex decomposition of a channel $\mathcal{E}\otimes\mathcal{F}$ into two
boundary elements  and we conclude
$t_{\mathcal{E}\otimes\mathcal{F}}(\mathcal{E}_+\otimes\mathcal{F}_+)=b(\mathcal{E})b(\mathcal{F})$.
Due to definition of boundariness from Eq. (\ref{eq:defbdorig}) we obtain the
upper bound from the proposition.
\end{proof}

\begin{prop}\label{prop:tensor}
For states and observables the boundariness is multiplicative, i.e.
$b(x\otimes y)=b(x)b(y)$, where $x,y$ stands for any pair of states,
or observables.
\end{prop}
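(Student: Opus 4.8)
The plan is to bypass the convex-decomposition definition and evaluate $b$ directly through the dual formula of Proposition~\ref{prop:evalbd1}, $b(y)=\min\{\<f,y\>: f\in C^*,\ \|f\|_e=1\}$. For both states and observables this will produce an explicit expression of the form ``$\tfrac1d$ times a minimal eigenvalue'', after which multiplicativity reduces to the elementary spectral fact that the eigenvalues of a tensor product are the products of the individual eigenvalues. The sub-multiplicative inequality $b(x\otimes y)\le b(x)b(y)$ can also be seen directly, exactly as in Proposition~\ref{prop:tensor_general}, since a tensor product one of whose factors lies on the boundary is itself a boundary element (its positive operator acquires a nonzero kernel); but the explicit formula delivers both inequalities at once, which is why I prefer that route.

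For states the cone is self-dual with order unit $e=I$, so $\|f\|_e=\lambda_1(f)$ and the feasible set in Proposition~\ref{prop:evalbd1} is $\{f: 0\le f\le I,\ \lambda_1(f)=1\}$. Minimizing $\Tr(f\varrho)$ over it forces the unit eigenvalue of $f$ onto the least eigenvector of $\varrho$, giving $b(\varrho)=\lambda_{\min}(\varrho)$. Applied to the composite this reads $b(\varrho\otimes\sigma)=\min_{0\le g\le I,\ \lambda_1(g)=1}\Tr\bigl(g\,(\varrho\otimes\sigma)\bigr)=\lambda_{\min}(\varrho\otimes\sigma)$, and since $\lambda_{\min}(\varrho\otimes\sigma)=\lambda_{\min}(\varrho)\,\lambda_{\min}(\sigma)$ the claim follows in one line.

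For observables I would repeat the computation in the cone of $m$-tuples $(A_1,\dots,A_m)$ of positive operators, whose base of POVMs corresponds to the order unit $e=\tfrac1d(I,\dots,I)$. A short calculation gives $\|(F_i)\|_e=d\max_i\lambda_1(F_i)$, so that Proposition~\ref{prop:evalbd1} becomes the minimization of $\sum_i\Tr(F_iE_i)$ subject to $F_i\ge0$ and $\max_i\lambda_1(F_i)=1/d$; placing all the weight on the single outcome and least eigenvector that is cheapest yields $b\bigl((E_i)\bigr)=\tfrac1d\min_i\lambda_{\min}(E_i)$. The outcomes of the product observable being the $E_i\otimes F_j$, the factorization $\min_{i,j}\lambda_{\min}(E_i\otimes F_j)=\bigl(\min_i\lambda_{\min}(E_i)\bigr)\bigl(\min_j\lambda_{\min}(F_j)\bigr)$ gives $b(E\otimes F)=b(E)b(F)$ immediately.

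The step I expect to carry the real content is the assertion, implicit in the state calculation, that enlarging the feasible set to include entangled dual elements $g$ cannot push $\Tr\bigl(g(\varrho\otimes\sigma)\bigr)$ below the product value $b(\varrho)b(\sigma)$. This is exactly what distinguishes states and observables from channels: here the optimal witness is a product projector onto product least-eigenvectors, so entanglement in the witness is useless and the formula factorizes exactly. For channels the analogous quantity is the maximally entangled numerical radius of $J_\cF^{-1}$ from Theorem~\ref{thbfore}, whose optimizer is a global unitary channel that need not split as a tensor product of local unitaries; entanglement across the two systems can then be strictly advantageous, which is precisely why only the sub-multiplicativity of Proposition~\ref{prop:tensor_general} survives in that case.
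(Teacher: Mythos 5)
Your treatment of states is correct and is, at its core, the paper's own argument: the paper simply quotes from Ref.~\cite{haapasalo2014} that boundariness of a state equals its smallest eigenvalue and then invokes the fact that spectra multiply under tensor products; you re-derive the eigenvalue formula from Proposition~\ref{prop:evalbd1} (for $f\ge 0$ one indeed has $\|f\|_e=\lambda_1(f)$, and minimizing $\Tr (f\varrho)$ gives $\lambda_{\min}(\varrho)$), which is a valid and slightly more self-contained route to the same conclusion.

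The observable computation, however, rests on a wrong identification of the cone. The set of $m$-outcome POVMs is \emph{not} a base of the cone of all tuples $(A_1,\dots,A_m)$ of positive operators: the hyperplane $\frac{1}{d}\sum_i \Tr A_i=1$ intersects that cone in the set of all positive tuples of total trace $d$, which strictly contains the POVMs (e.g.\ $(d\varrho,0,\dots,0)$ lies in it for every state $\varrho$). The cone actually generated by the POVMs is $C=\{(A_i)\,:\,A_i\ge 0,\ \sum_i A_i=c\,I \text{ for some } c\ge 0\}$, whose dual is a quotient in which your norm formula $\|(F_i)\|_e=d\max_i\lambda_1(F_i)$ fails: for instance $f=(\ketbra{\psi}{\psi},0,\dots,0)$ has $\|f\|_e=1$ there, not $d$. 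Consequently your intermediate formula $b\bigl((E_i)\bigr)=\frac{1}{d}\min_i\lambda_{\min}(E_i)$ is off by the factor $\frac{1}{d}$; the correct value is $b\bigl((E_i)\bigr)=\min_i\lambda_{\min}(E_i)$, which follows directly from the definition: writing $\lambda=\min_i\lambda_{\min}(E_i)$, every POVM $(F_i)$ satisfies $E_i\ge \lambda I\ge \lambda F_i$, so $t_E(F)\ge\lambda$, while the trivial observable concentrated on the minimizing outcome attains $\lambda$. Your formula also contradicts the paper's later proposition that $b^o_{\max}=1/n$ independently of the dimension (yours would give $1/(nd)$). The multiplicativity conclusion happens to survive this error only because the spurious factor is itself multiplicative, $\frac{1}{d_1 d_2}=\frac{1}{d_1}\cdot\frac{1}{d_2}$, so the proof is salvageable, but the observable case must be redone in the correct cone or argued directly as above. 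Finally, your closing assertion that for channels entanglement in the optimizer is strictly advantageous, so that ``only sub-multiplicativity survives,'' overstates what is known: the paper reports numerical evidence that boundariness is multiplicative for channels as well; it is only the proof technique that breaks down there.
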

\begin{proof}
The equality in Proposition \ref{prop:tensor} is fulfilled, because for states
and observables the boundariness is given by the smallest eigenvalue and
eigenvalues of the tensor products are products of the eigenvalues.
\end{proof}

We have numerical evidence suggesting that equality holds also in the case of
channels, but we have no proof of such conjecture. Using Eq. (\ref{eq:bforf}),
this is equivalent
to equality of $\max_{\xi}\bra{\xi}J^{-1}_{\cE}\otimes J^{-1}_{\cF}\ket{\xi}$
and $\max_{\chi}\bra{\chi}J^{-1}_{\cE}\ket{\chi} \max_\omega
\bra{\omega}J^{-1}_{\cF}\ket{\omega},$
where $\xi, \chi, \omega$ are maximally entangled states on the corresponding
systems.

Below we prove this equality for case of qubit channels when one of the channels
is the ''maximally mixed'' channel $\cF$, hence, for this pair of channels the
boundariness is multiplicative.

\begin{prop}\label{prop:product}
Let $\cE$ be an arbitrary qubit channel and let $\cF$ be the erasure channel
mapping any input to $\frac{1}{d} I$. Then $b(\cE\otimes \cF)=b(\cE)b(\cF)$.
\end{prop}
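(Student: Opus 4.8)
The plan is to upgrade the submultiplicativity of Proposition~\ref{prop:tensor_general}, which already gives $b(\cE\otimes\cF)\le b(\cE)b(\cF)$, to an equality by establishing the matching lower bound $b(\cE\otimes\cF)\ge b(\cE)b(\cF)$. Using the formula of Theorem~\ref{thbfore}, boundariness is the reciprocal of a maximum of $\bra{\psi}J^{-1}\ket{\psi}$ over maximally entangled $\ket{\psi}$, and since $(J_\cE\otimes J_\cF)^{-1}=J_\cE^{-1}\otimes J_\cF^{-1}$, the desired lower bound is equivalent to
\[
\max_{\xi}\bra{\xi}J_\cE^{-1}\otimes J_\cF^{-1}\ket{\xi}\le\Big(\max_{\chi}\bra{\chi}J_\cE^{-1}\ket{\chi}\Big)\Big(\max_{\omega}\bra{\omega}J_\cF^{-1}\ket{\omega}\Big),
\]
where $\xi,\chi,\omega$ run over maximally entangled states on the composite and on the two individual Choi spaces. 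The opposite inequality is immediate (restrict $\xi$ to product states $\chi\otimes\omega$) and merely reproduces submultiplicativity, so the displayed $\le$ is the genuine content.

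The crucial simplification comes from the special form of $\cF$: since $\cF$ maps every input to $\tfrac1d I$, its Choi operator is $J_\cF=\tfrac{1}{d^2}I$, hence $J_\cF^{-1}=d^2 I$ and $\max_\omega\bra{\omega}J_\cF^{-1}\ket{\omega}=d^2=1/b(\cF)$. Thus $J_\cE^{-1}\otimes J_\cF^{-1}=d^2\,(J_\cE^{-1}\otimes I)$, with $J_\cE^{-1}$ acting on the two qubit factors and $I$ on the two $d$-dimensional factors, so that for any maximally entangled $\ket{\xi}$ on the composite Choi space
\[
\bra{\xi}J_\cE^{-1}\otimes J_\cF^{-1}\ket{\xi}=d^2\,\tr{J_\cE^{-1}\rho},\qquad \rho={\rm tr}_{dd}\ketbra{\xi}{\xi},
\]
where $\rho$ is the two-qubit state obtained by tracing out both $d$-dimensional factors. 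The lower bound therefore reduces to the identity $\max_\xi\tr{J_\cE^{-1}\rho}=\max_\chi\bra{\chi}J_\cE^{-1}\ket{\chi}$.

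Next I would note that $\rho$ always has maximally mixed marginals: because $\ket{\xi}$ is maximally entangled across the composite cut, the reduction of $\ketbra{\xi}{\xi}$ to one full copy $\cH_s\otimes\cH_d$ is $\tfrac{1}{2d}I$, and tracing out the remaining $d$-dimensional factor leaves $\tfrac12 I$ on each qubit factor. I then invoke the structural fact, special to qubits, that every two-qubit state with maximally mixed marginals is a convex combination of maximally entangled pure states: writing $\rho=\tfrac14(I\otimes I+\sum_{ij}T_{ij}\sigma_i\otimes\sigma_j)$, a singular value decomposition $T=O_1\Lambda O_2^T$ with $O_1,O_2\in SO(3)$ and $\Lambda$ real diagonal (signs absorbed to keep $O_i$ proper rotations) is implemented by local unitaries through the covering $SU(2)\to SO(3)$, bringing $\rho$ to Bell-diagonal form, which is manifestly a convex mixture of locally rotated Bell states. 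Linearity of $\rho\mapsto\tr{J_\cE^{-1}\rho}$ together with positivity of $J_\cE^{-1}$ then yields $\tr{J_\cE^{-1}\rho}=\sum_k p_k\bra{\chi_k}J_\cE^{-1}\ket{\chi_k}\le\max_\chi\bra{\chi}J_\cE^{-1}\ket{\chi}$ for every attainable $\rho$, which (with the trivial reverse bound) gives the equality and hence the claim.

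The main obstacle is precisely this last structural input: the identity $\max_\xi\tr{J_\cE^{-1}\rho}=\max_\chi\bra{\chi}J_\cE^{-1}\ket{\chi}$ hinges on the reduced states $\rho$ never leaving the convex hull of maximally entangled states, which holds for two qubits but fails in higher dimension. This is exactly where the qubit hypothesis on $\cE$ enters and explains why the argument does not immediately extend to general $\cE$. One should also verify that the tensor-factor reshuffling relating $J_{\cE\otimes\cF}$ to $J_\cE\otimes J_\cF$ does not affect the above; since $J_\cF$ is proportional to the identity, the reshuffling acts trivially on the $\cF$-factors and the maximally mixed marginals of $\rho$ are untouched.
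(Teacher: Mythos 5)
Your proof is correct and follows essentially the same route as the paper: the paper traces out the $d$-dimensional factors of $|V\kk\bb V|$ to obtain an operator $X_V$ on the two qubit factors with marginals $dI$, identifies $\tfrac{1}{2d}X_V$ as the Choi matrix of a unital qubit channel, and invokes the fact that unital qubit channels are random unitary --- which is precisely the Choi-dual of your structural fact that two-qubit states with maximally mixed marginals are convex mixtures of maximally entangled pure states. The only difference is that you prove this key fact explicitly via the correlation-matrix SVD and local unitaries, whereas the paper cites it as well known.
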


\begin{proof}
By Proposition \ref{prop:tensor}, $b(\cE\otimes \cF)\le b(\cE)b(\cF)$, so that
we have to show the opposite inequality.  Let $\cE: \cB(\cH_A)\to \cB(\cH_B)$
and $\cF: \cB(\cH_{A'})\to \cB(\cH_{B'})$, where $\cH_A$, $\cH_B$ denote copies
of $\cH_2$,  and $\cH_{A'}$, $\cH_{B'}$ denote copies of $\cH_d$.  Since
$J_{\cF}^{-1}=d^2 I_{B'A'}$ then by Theorem \ref{thbfore} we want to prove the
following inequality
\[
\max_{V\in \cU(\cH_{BB'})} \bb V| J^{-1}_{\cE}\otimes I_{B'A'}|V\kk\le d\max_{U\in \cU(\cH)} \bb U |J^{-1}_\cE |U\kk\,.
\]
For $V\in \cU(\cH_{BB'})$, let $X_V=\mathrm{tr}_{B'A'}|V\kk\bb V|$. Then $X_V$ is a positive operator on $\cH_{BA}$ and we have
\[
\mathrm{tr}_{B}X_V=\mathrm{tr}_{A'}\mathrm{tr}_{BB'} |V\kk\bb V|=d I_A.
\]
Similarly, $\mathrm{tr}_AX_V=d I_B$. It follows that $\frac{1}{2d} X_V$ is the Choi-Jamiolkowski matrix of a unital qubit channel. As it is well known, any such channel is a random unitary channel, so that there are some unitaries $U_i\in \cU(\cH_2)$ and probabilities $p_i$ such that $X_V=d\sum_ip_i |U_i\kk\bb U_i|$. It follows that
\[
\bb V| J^{-1}_{\cE}\otimes I_{B'A'}|V\kk=\tr{J_{\cE}^{-1}X_V} \le d\max_{U\in \cU(\cH)} \bb U|J_\cE^{-1}|U\kk.
\]
\end{proof}

\section{Maximal value of boundariness}
By definition, boundariness takes values between zero and one half, but all
values in this interval are not necessarily attained. A simple example is the
triangle (see Fig.~\ref{fig:subfigure1}),
where one third is the maximal value. In this section we will
investigate, what is the highest achievable value of boundariness in quantum
convex sets, and which are the points achieving it. In fact, we will see that
such point is unique and coincides with so-called maximally mixed element.

As for the other questions addressed in this paper, it is straightforward to
evaluate the maximal value for states and measurements, but the case of channels
is more involved.

\begin{prop}
The maximal value of boundariness for quantum convex sets is given
as follows:
\begin{itemize}
\item{\emph{States:}} $b_{\max}^s=1/d$ achieved for completely mixed state $\varrho=\frac{1}{d}I$.
\item{\emph{Observables:}} $b_{\max}^o=1/n$ achieved for $n$-outcome (uniformly)
trivial observable $\{E_j=\frac{1}{n}I\}_{j=1}^n$.
\item{\emph{Channels:}} $b_{\max}^c=1/d^{2}$ achieved for completely depolarizing channel mapping all states into completely mixed state $\frac{1}{d}I$.
\end{itemize}
\end{prop}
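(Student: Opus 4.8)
My plan is to split the proposition into the three listed cases, of which states and observables are routine and channels carry the real content. For the first two I would invoke the fact, already recorded in the proof of Proposition~\ref{prop:tensor}, that here boundariness equals the smallest eigenvalue: $b(\varrho)=\lambda_{\min}(\varrho)$ for a state, and $b(\{E_j\})=\min_j\lambda_{\min}(E_j)$ for an $n$-outcome observable. For a state the $d$ eigenvalues are nonnegative and sum to $1$, so their minimum is at most $1/d$, with equality exactly when all equal $1/d$, i.e. $\varrho=\frac1d I$. For an observable the normalization $\sum_j E_j=I$ gives $\sum_j\Tr(E_j)=d$, so the $nd$ eigenvalues of the effects sum to $d$; their minimum is therefore at most the average $d/(nd)=1/n$, with equality precisely when every eigenvalue equals $1/n$, that is $E_j=\frac1n I$ for all $j$. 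This disposes of both cases, uniqueness included.

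For channels I would start from the formula of Theorem~\ref{thbfore}, which turns maximizing $b(\cF)$ into minimizing $\max_U\bb U|J_\cF^{-1}|U\kk$ over channels. A direct check identifies the candidate: for the completely depolarizing channel one has $J_\cF=\frac1{d^2}I$, so $\bb U|J_\cF^{-1}|U\kk=d^2\,\bb U|U\kk=d^3$ for every $U$, giving $b(\cF)=1/d^2$. It then remains to show $\max_U\bb U|J_\cF^{-1}|U\kk\ge d^3$ for every channel, with equality only for this $J_\cF$.

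The key step, and what I expect to be the crux of the argument, is to lower-bound the intractable maximum over unitaries by the Haar average,
\[
\max_U\bb U|J_\cF^{-1}|U\kk\ \ge\ \int dU\,\bb U|J_\cF^{-1}|U\kk=\Tr\!\Big(J_\cF^{-1}\!\int dU\,|U\kk\bb U|\Big).
\]
By the left- and right-invariance of the Haar measure the operator $\int dU\,|U\kk\bb U|$ commutes with all $V\otimes I$ and all $I\otimes W$, hence is a multiple of the identity, and the normalization $\bb U|U\kk=d$ fixes it to $\frac1d I$ on $\cH_d\otimes\cH_d$; thus the average is $\frac1d\Tr(J_\cF^{-1})$. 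Since $J_\cF$ is a density operator on the $d^2$-dimensional space, its eigenvalues $\lambda_i$ satisfy $\sum_i\lambda_i=1$, and the AM--HM (Cauchy--Schwarz) inequality gives $\Tr(J_\cF^{-1})=\sum_i\lambda_i^{-1}\ge (d^2)^2=d^4$, with equality iff all $\lambda_i=1/d^2$, i.e. $J_\cF=\frac1{d^2}I$. Combining, $\max_U\bb U|J_\cF^{-1}|U\kk\ge\frac1d d^4=d^3$, hence $b(\cF)\le 1/d^2$.

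The main obstacle is exactly that the boundariness formula only expresses $b(\cF)$ through a maximum over unitaries that cannot be evaluated for a general $J_\cF$; replacing it by the Haar average is what makes the bound computable. For uniqueness I would note that $b(\cF)=1/d^2$ forces both displayed inequalities to be tight, and the AM--HM step already pins down $J_\cF=\frac1{d^2}I$, the completely depolarizing channel; conversely, at that point the integrand $\bb U|J_\cF^{-1}|U\kk$ is constant in $U$, so max equals average and both inequalities are simultaneously saturated, confirming that the maximal value is attained there and nowhere else.
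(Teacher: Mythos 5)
Your proof is correct and follows the paper's skeleton in all three cases: smallest--eigenvalue arguments for states and observables, and for channels the reduction via Theorem~\ref{thbfore} to minimizing $\max_U \bb U|J_\cF^{-1}|U\kk$, followed by the AM--HM bound $\Tr(J_\cF^{-1})\ge d^4$ whose equality condition pins down $J_\cF=\frac{1}{d^2}I$. The one step you implement genuinely differently is the crucial lower bound $\max_U\bb U|J_\cF^{-1}|U\kk\ge \frac{1}{d}\Tr(J_\cF^{-1})$: the paper obtains it by exhibiting an explicit orthonormal basis $\{\ket{v_{pq}}\}$ of maximally entangled states built from the shift-and-multiply (Weyl) unitaries $Z^pW^q$, so that summing $\bra{v_i}J_\cF^{-1}\ket{v_i}$ over this basis gives $\Tr(J_\cF^{-1})\le d\,\max_U\bb U|J_\cF^{-1}|U\kk$; you instead bound the maximum by the Haar average and use left/right invariance (Schur's lemma) to evaluate $\int dU\,|U\kk\bb U|=\frac{1}{d}I$. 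The two devices compute the same average and yield the identical inequality, so the difference is one of technique rather than substance: the paper's version is more elementary and constructive (a finite sum over $d^2$ unitaries, no representation theory), while yours avoids having to construct the Weyl basis and makes uniqueness slightly more transparent, since ``max equals average iff the integrand is constant in $U$'' is immediate for the Haar integral. One implicit point shared by both arguments that would be worth stating: the maximization of $b$ over channels may be restricted to interior channels (invertible $J_\cF$), since boundary channels have $b=0<1/d^2$; this is what licenses writing $J_\cF^{-1}$ throughout. Your treatment of the observable case, via $\sum_j\Tr (E_j)=d$ and comparison of the minimum eigenvalue with the average $1/n$, is also a correct (and slightly more detailed) version of the paper's one-line appeal to the smallest-eigenvalue characterization.
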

\begin{proof}
For states and measurements \cite{haapasalo2014} the highest boundariness means
highest value of the lowest eigenvalue, which leads to maximally mixed state
$\rho=\frac{1}{d}I$ and (uniform) trivial observable
$\{E_i=\frac{1}{N}I\}_{i=1}^N$, respectively. The case of channels is more
subtle. From the formula (\ref{eq:bforf}) giving the boundariness of a channel
it is clear that we search for a channel $\cF$ such that ${\max_{U} \bb U
|J^{-1}_{\cF}| U \kk }$ is minimized. We construct a simple lower bound using an
orthonormal basis $\{\ket{v_i}\}_{i=1}^{d^2}$ of maximally entangled states.
\begin{align}
\label{eq:lbmaxu}
\tr{J^{-1}_{\cF}}=\sum_{i=1}^{d^2} \bra{v_i} J^{-1}_{\cF}\ket{v_i}
\leq d \max_{U} \bb U |J^{-1}_{\cF}| U \kk.
\end{align}
Such a basis $\{\ket{v_{pq}}=Z^p W^q\otimes I \frac{1}{\sqrt{d}}\sum_j
\ket{jj}\}$ can be constructed by Shift and multiply unitary operators $Z=\sum_j
\ket{j\oplus 1}\bra{j}$, $W=\sum_j \omega^j \ket{j}\bra{j}$, where
$\omega=e^{\frac{2\pi i}{d}}$.
On the other hand from spectral decomposition $J_{\cF}=\sum_i \lambda_i
\ket{a_i}\bra{a_i}$, where $\sum_i \lambda_i=1$, we have
$\tr{J^{-1}_{\cF}}=\sum_i \frac{1}{\lambda_i}\geq d^4$. Combining this with Eq.
(\ref{eq:lbmaxu}) we get $d^3\leq \max_{U} \bb U |J^{-1}_{\cF}| U \kk$.
Inserting this into Eq. (\ref{eq:bforf}) we finally obtain $b(\cF)\leq
\frac{1}{d^2}$. It is easy to see that the inequalities can be made tight only
by a single channel, which maps everything to a complete mixture.
\end{proof}

\section{Summary}
This paper completes and extends the previous work~\cite{haapasalo2014}
in which the concept of boundariness was introduced. We proved that
for compact convex sets
evaluation of boundariness of $y$
coincides with the question of the best distinguishable element from $y$, i.e.
$$
2(1-b(y))=\max_x ||x-y||\,,
$$
where $||\cdot||$ denotes the so-called base norm (being
trace-norm for states, completely bounded norm -- also known as the diamond norm for channels
and observables). This identity was formulated in Ref.\cite{haapasalo2014}
as an open conjecture for case of quantum channels and is confirmed by our
results presented in this paper. In fact, we have discovered that
the optimum is attained only for unitary channels. This surprising
result provides quite unexpected operational characterization of
unitary channels and exhibits their specific role among boundary elements
and in minimum error discrimination questions.
The unique role of unitary channels is noticeable also in the explicit formula that we derived for the evaluation of boundariness of channels.
In the current paper we investigated only quantum channels mapping between Hilbert spaces of the same dimension. The results can be easily generalized for the case when the input has smaller dimension than the output. The role of unitary channels will be played by isometries. The opposite relation of the input/output dimensions seems to be much more complicated and is left for future research.
Further we investigated how the boundariness behaves under the tensor
product. We have shown that boundariness is a multiplicative quantity
for states and observables, however, for channels we proved only the sub-multiplicativity
$$
b(\cE\otimes\cF)\leq b(\cE)b(\cF)\,.
$$
However, our numerical analysis suggests that the boundariness is multiplicative
also for case of channels.

Exploiting the relation between the boundariness and the discrimination,
the multiplicativity implies that the most distinguishable element from
$x\otimes y$ is still a factorized element $x_0\otimes y_0$, where
$x_0,y_0$ stands for the most distinguishable elements from $x,y$,
respectively. For channels this would mean that factorized unitaries
are the most distant ones for all factorized channels. However, whether
this is the case is left open.

In the remaining part of the paper we evaluated explicitly the maximal
value of boundariness. We found that this maximum is achieved
for intuitively the maximally mixed elements, i.e. for completely mixed state,
uniformly trivial observables and channel contracting state space to the
completely mixed state.  In particular, for $d$-dimensional
quantum systems we found for states $b_{\max}^s=1/d$, for
observables $b_{\max}^o  = 1/n$ is independent on the dimension (only
the number of outcomes $n$ matters), and for channels $b^c_{\max}=1/d^2$. Let us
stress that these numbers also determine the optimal values of error
probability for related discrimination problems.

\acknowledgments
We thank to Errka Hapaasalo for discussions and  workshop ceqip.eu
for initiating this work. This work was supported by project
VEGA 2/0125/13 (QUICOST).
Z.P. acknowledges a support from the Polish National Science Centre
through grant number DEC-2011/03/D/ST6/00413.
A.J. acknowledges support by Research and Development Support Agency
under the contract No. APVV-0178-11 and  VEGA 2/0059/12.
M.S. acknowledges support by the Operational Program Education for
Competitiveness—-European Social Fund (Project No. CZ.1.07/2.3.00/30.0004)
of the Ministry of Education, Youth and Sports of the Czech
Republic.
M.Z. acknowledges the support of GA\v CR project P202/12/1142
and COST Action MP1006.



\end{document}